\newif\ifdouble
\newtheorem{theorem}{Theorem}
\newtheorem{lemma}{Lemma}
\newtheorem{definition}{Definition}
\newtheorem{corollary}{Corollary}
\newtheorem{example}{Example}
\newtheorem{assum}{Assumption}
\newtheorem{opt}{Optimization Problem}
\newcommand{\paren}[1]{\left( #1 \right)}
\newcommand{\argmin}[1]{\underset{#1}{\arg \min}}
\newcommand{\st}{\text{s.t.}}
\newcommand{\argmax}[1]{\underset{#1}{\arg \max}}
\newcommand{\off}[1]{}
\newcommand{\rcomp}{r_p^{\text{comp}}}
\newcommand{\rcomm}{r_p^{\text{comm}}}
\newcommand{\phil}{\underline{\phi}_p}
\definecolor{myblue}{rgb}{0.01, 0.28, 1}
\definecolor{mygreen}{rgb}{0.13, 0.55, 0.13}
\pgfplotsset{compat=1.16}
\definecolor{DarkGreen}{rgb}{0.1,0.5,0.1}
\definecolor{DarkRed}{rgb}{0.5,0.1,0.1}
\definecolor{DarkBlue}{rgb}{0.1,0.1,0.5}
\definecolor{DarkPurple}{rgb}{0.5,0.2,0.5}
\definecolor{DarkTurquoise}{rgb}{0.1,0.5,0.5}
\newif\ifcomen
\begin{document}

\title{Stream Distributed Coded Computing}

\author{}

\author{%
   \IEEEauthorblockN{\hspace{-0.1cm}Alejandro Cohen\IEEEauthorrefmark{1},
                     Guillaume Thiran\IEEEauthorrefmark{2},
                     Homa Esfahanizadeh\IEEEauthorrefmark{1},
                     and Muriel M\'edard\IEEEauthorrefmark{1}}\\
   \IEEEauthorblockA{\IEEEauthorrefmark{1}%
                     RLE, MIT, Cambridge, MA, USA, \{cohenale, homaesf, medard\}@mit.edu}\\
   \IEEEauthorblockA{\IEEEauthorrefmark{2}%
                     UCLouvain, Belgium,  guillaume.thiran@uclouvain.be\footnote{{ }Authors have equal contributions.}}\\
}

\maketitle

\begin{abstract}
The emerging large-scale and data-hungry algorithms require the computations to be delegated from a central server to several worker nodes. One major challenge in the distributed computations is to tackle delays and failures caused by the stragglers. To address this challenge, introducing efficient amount of redundant computations via distributed coded computation has received significant attention. Recent approaches in this area have mainly focused on introducing minimum computational redundancies to tolerate certain number of stragglers. To the best of our knowledge, the current literature lacks a unified end-to-end design in a heterogeneous setting where the workers can vary in their computation and communication capabilities. The contribution of this paper is to devise a novel framework for joint scheduling-coding, in a setting where the workers and the arrival of stream computational jobs are based on stochastic models. In our initial joint scheme, we propose a systematic framework that illustrates how to select a set of workers and how to split the computational load among the selected workers based on their differences in order to minimize the average in-order job execution delay. Through simulations, we demonstrate that the performance of our framework is dramatically better than the performance of naive method that splits the computational load uniformly among the workers, and it is close to the ideal performance.
\end{abstract}

\begin{IEEEkeywords}
Distributed coded computation, stragglers, large matrix-matrix multiplication, large matrix-vector multiplication, ultra-reliable low-latency, in-order execution delay, queuing theory.
\end{IEEEkeywords}

%%%%%%%%%%%%%%%%%%%%%%%%%%%%%%%%%%%%%%%%%%%%%%%%%%%%%%%%%%
\section{Introduction}
%%%%%%%%%%%%%%%%%%%%%%%%%%%%%%%%%%%%%%%%%%%%%%%%%%%%%%%%%%

\begin{figure}
    \centering
    \ifdouble
       \includegraphics[width = 1 \columnwidth]{non_and_iterative_sys_model_1}
    \else
       \includegraphics[width = 0.95 \columnwidth]{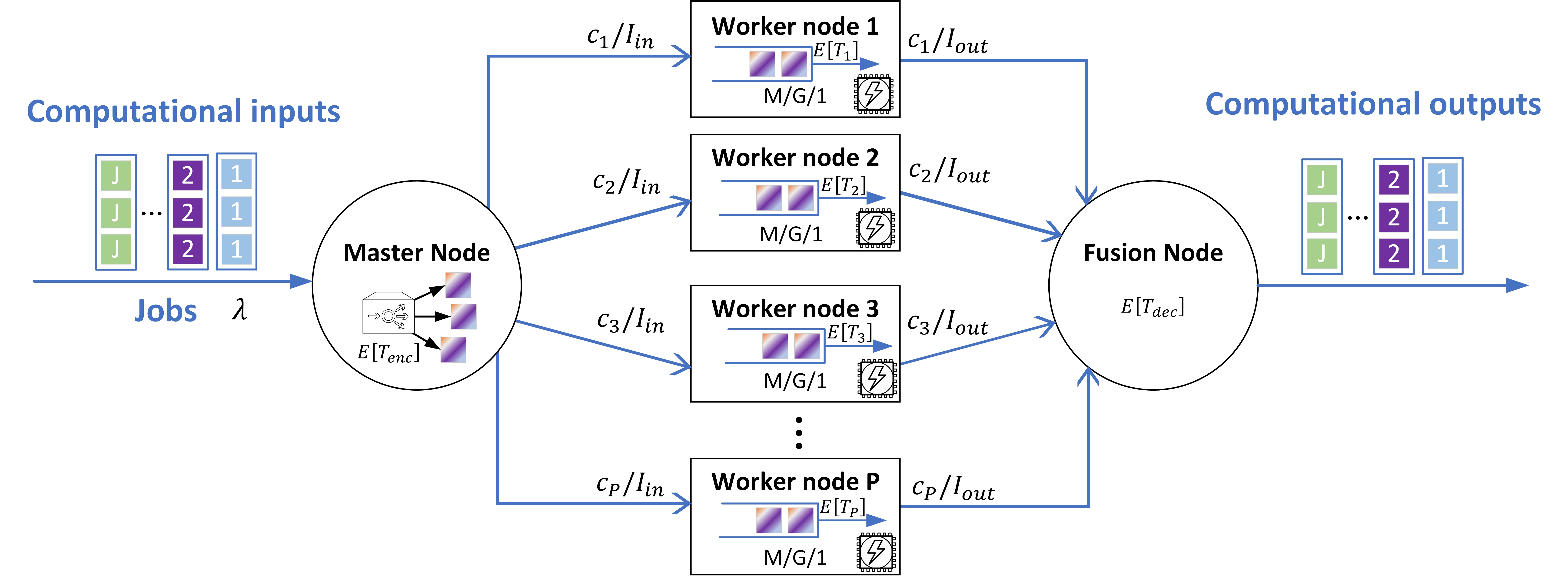}
    \fi
    \captionof{figure}{{System model for the distributed computation problem. The parameters are: $\lambda$ for the average job arrival rate. $E[T_\text{enc}]$ and $E[T_\text{dec}]$ for the average encoding time and decoding time, respectively. $E[T_p]$, $c_p/I_\text{in}$, and $c_p/I_\text{out}$ for the average computing time, incoming communication time, and outgoing communication time per job, respectively. Each worker is modeled with an M/G/1 queue.}}
    \label{fig:system_model_2}
\end{figure}

We are in the era of big data where the size and dimension of data grow dramatically surpassing the Moore's law \cite{moore1965cramming,moore1965moore}. As a result, the central heavy computations over massive amount of data have become more unrealistic, and distributing the computations over several \textit{workers} by a \textit{master node} has become the rising solution \cite{dutta2019short,mallick2019rateless,yu2019lagrange}\footnote{In another view of distributed computation, which is not the focus of this paper, there is no master to orchestrate, and the workers coordinate among themselves.}. The distributed computation requires a computational \textit{job} to be split into several \textit{tasks} with lower complexity and smaller input data, such that they can be distributed among a set of workers and the final job result can be obtained by a \textit{fusion node} using the combined task results. Fig.~\ref{fig:system_model_2} illustrates a system model for the distributed computation problem.

One major challenge in the distributed computation framework is addressing delays and failures caused by slow and/or unreliable workers known as \textit{stragglers} affecting the overall performance of the system \cite{dean2013tail,joshi2017efficient,ramamoorthy2020straggler}. The efforts to reduce the bottleneck effect of stragglers has opened up a new line of research as distributed coded computation that has attracted significant attention \cite{sheth2018application,d2020gasp,wahabfederated}. The idea is to introduce redundancies in the distributed computations such that the output can be retrieved from an arbitrary subset of task results, and so the performance of the system will not be limited by the stragglers \cite{yu2017polynomialn,raviv2020gradient}.

The majority of previous works on distributed coded computation focus on decoding the result of a computational job given a subset of successful task results assuming a homogeneous cluster of workers \cite{mallick2019rateless}. In practice, there is a variance among computational powers of the workers depending on their CPU/GPU powers, aging factors, maintenance conditions, etc \cite{sarikaya2019motivating,lim2020federated}. Besides, the communication links to and from each worker can have various rates causing various communication latency \cite{mallick2019rateless,feng2019joint,9155442}. Incorporating the heterogeneity of the workers into the distributed coded computation design has been less studied in the literature. In \cite{duffy2021mds}, a model is considered where maximum-distance-separable (MDS) codes{\cite{li2016unified,lee2017high,kim2019coded,lee2017speeding}} are used for generating computational redundancies, and redundant computations are removed from queue of workers once enough tasks are finished. It is shown via a precise queuing analysis that replication codes result in a higher delay compared to the MDS codes in this setting. However, a scheduling that incorporates the variabilities at the workers into the design is an open challenge.

\subsection*{Main Contribution}
In this paper, we propose a new adaptive framework to execute a stream of non-iterative computational jobs with low in-order execution delay in a distributed fashion, see Fig.~\ref{fig:system_model_2}. We consider each worker is modeled with an M/G/1 queue\footnote{In queuing theory, an M/G/1 queue is a model with a single processor where arrivals have Poisson distribution and service times have a General distribution.}, and the workers can vary in their service rates. This is a first step to design a joint scheduling-coding framework for stream distributed computation utilizing queuing preliminaries. The average job execution delay depends on the system model parameters, parameters of the coding scheme for splitting the jobs into smaller tasks, and scheduling scheme for distribution of the tasks among the acquired workers. The scheduling-coding optimization is critical to minimize the in-order execution delay of the jobs and to maximize the utilization of the workers.

To this end, the proposed framework selects optimal code parameters and optimally splits the computations among the workers based on their computational and communication power. Despite the previous works which focus on one or few targets, we consider all sources of delays in our end-to-end system design, i.e., encoding/decoding delay, queuing delay, computation delay, and communication delay, to optimize the parameters. Our solution provides a balance between the end-to-end in-order execution delay and the total acquired computational resources for stream computing applications.  Our framework is compatible with a general form of distributed computation and can be combined with many existing coded computation codes, e.g., \cite{dutta2019short,tandon2016gradient,tandon2017gradient,raviv2018gradient,Dutta2020,sheth2018application}, to optimize their code parameters in order to reduce the in-order execution delay and improve their utilization.

Our method can be briefly described as follows. We assume a possibly large set of workers are available to select from. We first choose an appropriate set of workers that provides sufficient resources and keeps the system stable. Then, the optimal load split (contribution portion of each worker) is identified to minimize the in-order execution delay. This optimization process is repeated periodically during the run-time, or when changes occur in the system\footnote{The memory-less job arrival model at the workers, i.e., M/G/1 model, is a result of the memory-less job arrival model at the master node. This model leads us to provide an optimal load split using the various stochastic features of the service time at the workers. Considering a general job arrival distribution at the master node requires more complex considerations in order to formulate the average job execution delay of the system. We leave this interesting extension as future work.}. We note that the optimal load split depends on first and second moment of each worker's service rate, and can be obtained via feedback and/or via prior knowledge of the master node.

The rest of this paper is organized as follows. In Section~\ref{sim:related}, we summarize the related works. In Section~\ref{sec:prob}, we describe the system model. In Section~\ref{sec:sol}, we propose our new framework for joint scheduling-coding in a distributed computation problem. In Section~\ref{sec:feedback}, we describe how to adaptively adjust the parameters needed for our proposed optimal solution using feedback. Sections~\ref{sec:sim} and \ref{sim:conc} are dedicated to simulation results and conclusions, respectively.

%%%%%%%%%%%%%%%%%%%%%%%%%%%%%%%%%%%%%%%%%%%%%%%%%%%%%%%%%%
\section{Related Work}\label{sim:related}
%%%%%%%%%%%%%%%%%%%%%%%%%%%%%%%%%%%%%%%%%%%%%%%%%%%%%%%%%%

\begin{figure}
    \centering
    \includegraphics[width = 1 \columnwidth]{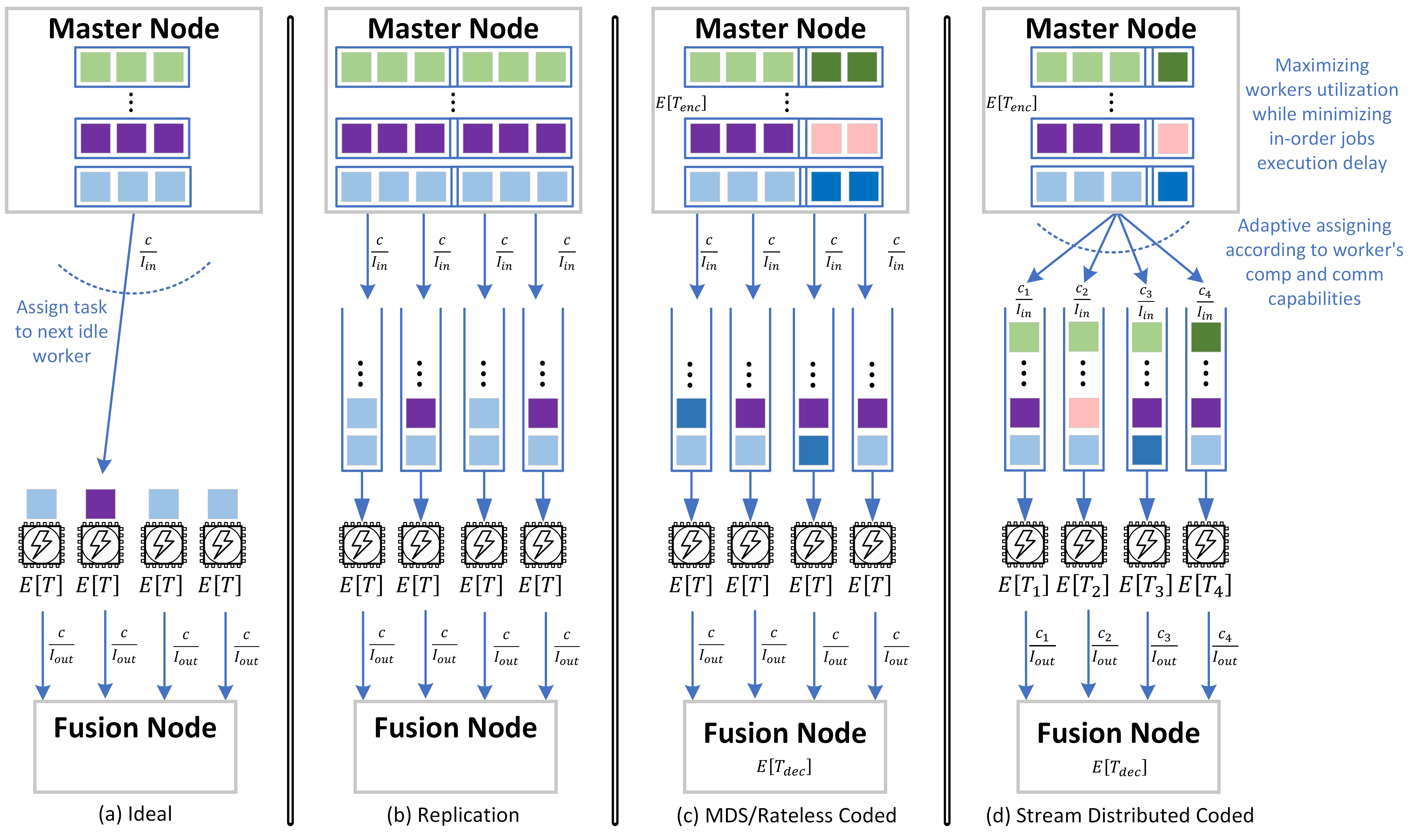}
    \captionof{figure}{{Various distributed computational schemes. Each computational job is split to several smaller tasks. The tasks related to one job are marked with the same color, and the redundant computations are distinguished from the necessary ones with a lighter color.}}
    \label{fig:code_ex}
\end{figure}

How to introduce computational redundancies in distributed computation has been considered in many modern applications that require heavy computations over a large amount of data. An interesting example is multiplication of two large matrices \cite{yu2017polynomial,Dutta2020,lee2017high,lee2017speeding,baharav2018straggler,li2016unified} and multiplication of a large matrix and a vector \cite{dutta2019short,mallick2019rateless}. Another example is MapReduce where a computationally intensive job of identifying specific features over a possibly large dataset, e.g., specific word counts in a large file,
is distributed among several workers such that each worker performs the process over one or several smaller subsets of the dataset \cite{li2015coded}. Random shuffling of data is another attractive example which benefits from the coded computation and enhances the performance of large-scale machine learning algorithms \cite{attia2019near,song2019pliable}. There, work mainly focuses on optimizing the code parameters to pursue goals such as reducing the number of task results required to finish a job \cite{Dutta2020,yu2017polynomial}, providing a balance between computation load and communication load\cite{li2016unified}, reducing the
encoding and decoding complexities \cite{baharav2018straggler}, improving the communication efficiency\cite{attia2019near,song2019pliable}, among others. Recently, an information theoretic model of functional compression for distributed computation has been studied in \cite{9155442}.

Fig.~\ref{fig:code_ex} depicts various scheduling-coding methods for the distributed coded computation problem. (a) is the ideal solution, where in-order the jobs received at the master node are assigned to the workers as soon as they become idle. (b) is the replication solution, where the tasks for each job are replicated and then uniformly distributed among the workers. (c) is coding solution, where the tasks, being the result of a distributed coding scheme, eg., MDS codes\cite{li2016unified,lee2017high} and rateless codes \cite{mallick2019rateless} and including some redundant tasks, are generated and assigned to the workers. (d) is our proposed joint coding and scheduling framework that incorporates the diversities of the workers to optimally split the task load of each job among the workers. According to our simulation results, the optimal distribution of computational jobs according to our framework not only results in an average in-order execution delay much lower than the uniform job distribution, but it also is capable to provide stability in cases that the uniform split is not able to do so.

%%%%%%%%%%%%%%%%%%%%%%%%%%%%%%%%%%%%%%%%%%%%%%%%%%%%%%%%%%
\section{System Model}\label{sec:prob}
%%%%%%%%%%%%%%%%%%%%%%%%%%%%%%%%%%%%%%%%%%%%%%%%%%%%%%%%%%

Fig.~\ref{fig:system_model_2} illustrates the setting for distributed computational of a stream of computational jobs with latency constraints. The master node sequentially receives the inputs for several computational jobs. We model the job arrival at the master with a Poisson stochastic random process $\mathcal{N}(l)$ with parameter $\lambda$ that represents the number of jobs arriving up to and including time step $l$, $l\in\{1,\dots,L\}$,
\begin{equation*}
    P(\mathcal{N}(l)= n) = e^{-\lambda l}\frac{(\lambda l)^{n}}{n!}.
\end{equation*}
Thus, $\lambda$ also denotes the average job arrival rate at each time step, and we assume {$J=\mathcal{N}(L)$}.

Upon arrival of each job, the master node assigns appropriate \textit{portion} of the computational job, i.e., appropriate number of tasks, to each worker in a selected set of workers $\mathcal{P}$, where $|\mathcal{P}|=P$. The queue at each worker is modeled with an M/G/1 queue \cite{Klein1975}. Let $\phi_p$ be the portion of a job that is assigned to the $p$-th worker (which will be designed by our solution), $p\in \{1,\dots,P\}$. These $P$ workers are selected from a large set of workers by the proposed framework. Each worker independently performs pre-defined computational operations and sends the results to a fusion node where the received computational results from workers are aggregated to produce the desired computational output for each job, in-order. We denote as $I_\text{in}$, $I_\text{out}$, and $C$ the input size (number of symbols), output size (number of symbols), and computational complexity (number of operations) for each job.

Each worker has limited computational resources and communication capacity at each time step. That is, each worker can perform operations on and transmit a limited amount of computational inputs at a time. Besides, the computational speed and communication rate can vary from worker to worker.
We denote as $c_p$ the transmission rate of the $p$-th worker, i.e., the number of symbols that can be communicated to or from the $p$-th worker at each time step. Let $T_{p}$ denote the time it takes for $p$-th worker to perform one entire computational job, i.e., $C$ operations. Then $E[T_{p}]$ and $E[T_{p}^2]$ denote the first moment (average) and the second moment of the service time for an entire job for the $p$-th worker, respectively, and the service rate of the worker for entire job is $1/E[T_{p}]$. The master node has access to $E[T_p]$ and $E[T_p^2]$ according to the workers' declaration and/or estimations during the execution, which will be explained thouroughly in Section~\ref{sec:worker_estimation}

We assume the first moment and second moment of the time it takes for the $p$-th worker to finish $\phi_p$ portion of a job is $\phi_p E[T_p]$ and $\phi_p^2E[T_p^2]$, respectively. We define the \textit{scaled service rate} of the $p$-th worker as $\rcomp \triangleq {1}/{E[T_p]\lambda}$, which denotes the portion a job that the $p$-th worker is able to perform on average during ${1}/{\lambda}$ time steps. We also define the \textit{scaled communication rate} of the $p$-th worker as $\rcomm\triangleq{ c_p}/{(I_\text{in}+I_\text{out})\lambda}$, which denotes the portion a job that the $p$-th worker is able to transfer during ${1}/{\lambda}$ time steps. We remind that $\phi_p$ is the portion of one job that is assigned to the $p$-th worker by the designed solution and $\rcomp$ is the portion one job that the $p$-th worker is able to perform on average during ${1}/{\lambda}$ time steps according to its declaration and/or the master node's estimation. Then, the average time it takes for the system to compute a job is $
\frac{1}{P}\sum_{p=1}^{P}\phi_pE[T_p]$, and
\begin{equation*}
    \frac{1}{P}\sum_{p=1}^P\phi_p E[T_{p}] \leq \frac{1}{P}\sum_{p=1}^P\rcomp E[T_{p}],
\end{equation*}
where we require for stability that the time it takes to solve a job is lower than the inverse of the arrival rate.
Now, we consider the formulation of the computational jobs. Let consider the $j$-th computational job, $j\in\{1,\dots,J\}$, arrived at the master node as
\begin{equation}\label{equ:distribyted-job}
    f\left(X_1(j),X_2(j),...,X_m(j)\right),
\end{equation}
where $X_1(j),X_2(j),...,X_m(j)$ are the computational inputs for the $j$-th job, and $f(.)$ describes the computational job. Each job can be divided into several smaller computational tasks that has smaller inputs and lower computational complexity. Each computational task works toward completion of a single job, the computational tasks work on inputs of the same size, and the input of a computational task does not have to be one of the computational inputs (it can be a function of them). We assume each task can be performed by one worker and requires a specific number of operations. The fusion node requires to obtain $K$ completed task results, each with the same size, per job to identify the final result.

\begin{table}
    \centering
    \begin{tabular}{|l|l|}
    \hline
    Parameter & Definition\\
    \hline
    \hline
    $l$&time step index, $l\in\{1,\dots,L\}$\\
    \hline
    $\lambda$ & average job arrival rate\\
    \hline
    $\mathcal{P}$ and $P$& set of acquired workers and its cardinality\\
    \hline
    $T_p$ & job service time of the $p$-th worker\\
    \hline
    $\rcomp$&scaled service rate of the $p$-th worker\\
    \hline
    $\rcomm$&scaled communication rate of the $p$-th worker\\
    \hline
    $\phi_p$ & fraction of load on the $p$-th worker\\
    \hline
    $\mu_p$ & computing rate of operations for the $p$-th worker\\
    \hline
    $K$&number of critical tasks per job\\
    \hline
    $\Omega$&redundancy ratio\\
    \hline
    \end{tabular}
    \caption{Table of frequently visited parameters.}
    \label{tab:parameters}
\end{table}

In this paper, as representative example of computational jobs, we consider the distributed multiplication of two large matrices using PolyDot scheme \cite{Dutta2020}. This coding scheme will be reviewed in more detail in Appendix~\ref{subsec:preliminaries}. The reason we choose Polydot scheme for distributed matrix multiplication in this paper is the flexibility it offers to balance the computational complexity (minimum required workers) and computational traffic per worker. This flexibility can be utilized later, via some hyper parameters, in designing the bigger solution in order to obtain the desired trade-off among the system performance metrics. We note that the proposed stream computing solution is compatible with any distributed computational job in form of (\ref{equ:distribyted-job}), e.g., Polynomial codes, Matdots codes, $d$-dimensional product codes \cite{BaharavISIT2018}, Short-Dot codes {\cite{dutta2019short}}, and many others. Table~\ref{tab:parameters} shows a list of frequently visited parameters throughout this paper along with their short definitions.

%%%%%%%%%%%%%%%%%%%%%%%%%%%%%%%%%%%%%%%%%%%%%%%%%%%%%%%%%%
\section{Scheduling and Failure Control Solution for Stream Coded Computing}\label{sec:sol}
%%%%%%%%%%%%%%%%%%%%%%%%%%%%%%%%%%%%%%%%%%%%%%%%%%%%%%%%%%

The goal of this section is proposing a systematic framework to identify the parameters of the system, i.e., selecting the set of workers and splitting the computational load among them, such that a stable computational solution with high throughput (high utilization of the acquired workers) and low in-order execution delay (in-order delivery delay of the jobs at the fusion node) is obtained. Our proposed solution efficiently combines the distributed computation idea with scheduling mechanism to improve the throughput and lower the in-order execution delay.

When a job arrives at the master node, the master node selects an appropriate set of workers and distribute the computational load among them. The first $K$ tasks related to a job (the minimum number of tasks for completion of a job) are called \textit{critical tasks}\footnote{Since in our model, we may assign multiple tasks per job to a worker, the condition $K\leq P$ is not necessary.}. The subsequent tasks related to a job are called \textit{supportive tasks}, which compensate for the system failures. Later in Section~\ref{sec:FB}, we introduce a variant of the solution that tracks the real-time realization of errors and delays during one job service time to further improve the performance by assigning feed-back based supportive tasks.

\begin{definition}[Task generation rate]\label{def:TaskGeneration}
Task generation rate of the system is defined as
$K\Omega$, which denotes the number tasks generated per job. The parameter $\Omega\geq1$ is called redundancy ratio.
\end{definition}
\noindent The $(\Omega-1)K$ supportive tasks per job compensate for the system stragglers, i.e., the workers that take much longer than expected to finish an assigned task. Thus, $\Omega$ provides a trade-off between the system throughput and delay such that larger $\Omega$ results in more redundancy (lower throughput) and also lower delay.

\begin{assum}\label{assum:no-bottleneck}
For each worker with index $p$, $p\in\{1,\dots P\}$, the incoming traffic rate $c_p/I_\text{in}$, the outgoing traffic rate $c_p/I_\text{out}$, the encoding rate of the master node, and the decoding rate of the fusion node are all equal to or larger than the average job service rate $1/E[T_p]$. Thus, the encoding, decoding, and communication are not sources of a  bottleneck for the workers.
\end{assum}
\noindent We note that if Assumption~\ref{assum:no-bottleneck} does not hold an extra delay is introduced to the system that needs to be incorporated in the analysis. In case the communication time from the master node to each worker is negligible, i.e., $c_p\rightarrow\infty$, the workers can use the master node buffer.

\begin{definition}[Valid Worker]\label{def:valid_coding} A worker is valid for a distributed coded computation if its parameters satisfy Assumption~\ref{assum:no-bottleneck}. In other words,
\begin{equation*}\label{equ:valid_coding}
\min\paren{\frac{1}{E[T_\text{enc}]},\frac{c_p}{I_{\text{in}}},\frac{c_p}{I_{\text{out}}},\frac{1}{E[T_\text{dec}]}}\geq \frac{1}{E[T_p]}.
\end{equation*}
Here, $E[T_\mathrm{enc}]$ is the average encoding time (for generation of $K\Omega$) at the master node, and $E[T_\mathrm{dec}]$ is the average decoding time per job at the fusion node (given $K$ successful task results).
\end{definition}

\begin{assum}\label{assu:stability1}
For a stable system,
\begin{align*}
    \lambda \leq \sum_{p=1}^P 1/E[T_p].
\end{align*}
\end{assum}
\noindent Moreover, since $\phi_p$ portion of a job is assigned to the $p$-th worker, $p\in\{1,\dots,P\}$, on average one job is assigned to the $p$-th worker during $\lambda\phi_p$ time steps. For stability of each worker, this time must be equal or greater then the average time it takes for the $p$-th worker to perform one job, i.e., $E[T_p]$, resulting in the following stronger stability condition.
\begin{assum}\label{assu:stability2}
For stability at the workers,
\begin{align*}
    \phi_p \leq\rcomp,\quad\forall  p\in\{1,\dots,P\}.
\end{align*}
\end{assum}
\noindent Since $\sum_{p=1}^{P}\phi_p=1$ in the proposed parallel setting, Assumption~\ref{assu:stability2} guarantees Assumption~\ref{assu:stability1}, and it needs to hold for any stable solution to avoid  queues' overflows.

For a choice of hyper parameters for the distributed coded computation scheme, e.g., $s$ and $t$ for the PolyDot scheme, a set of valid codes that satisfy Assumption~\ref{assu:stability1} are selected. The optimal load split among the selected workers, i.e., $\{\phi_1,\dots,\phi_P\}$, is determined to minimize the average job execution time under Assumption~\ref{assu:stability2}. The best choice for hyper parameters of the computation scheme along with the corresponding set of workers and the optimal load split among them are recorded for utilization. The procedure is repeated periodically or whenever changes occur in the system.

%%%%%%%%%%%%%%%%%%%%%%%%%%%%%%%%%%%%%%%%%%%%%%%%%%%%%%%%%%
\subsection{Optimized Split of Computational Load}

Since the problem setting is heterogeneous and workers have various computational powers and transmission rates, the computational load at the master node needs to be split appropriately among the workers. The parameter $\phi_p\in[0,1]$ denotes the fraction of computational load on the $p$-th worker, such that $\sum_{p=1}^{P} \phi_p=1$. Considering the M/G/1 queuing model of the workers, the job arrival rate is $\lambda\phi_p$, the first and second moments of the job service time are $E[T_p]$ and $E[T_p^2]$, respectively, and the work load at the queue at queue of the $p$-th worker is $\rho_p ={\lambda\phi_p}{E[T_p]}$. Hence the average time it takes for the $p$-thworker to respond to a job, including both waiting time in the queue and the processing time, is given by Pollaczek-Khinchin formula \cite[Chapter 5]{gallager2013stochastic},
\begin{equation*}
    D_{\text{comp},p}=\frac{\lambda\phi_p {E}[T_p^2]}{2 (1-\rho_p)}+E[T_p]=0.5\frac{E[T_p^2]}{E[T_p]}\frac{\phi_p}{\frac{1}{\lambda E[T_p]}-\phi_p}+E[T_p]=\frac{1}{\lambda}\paren{ \frac{a_p\phi_p}{\rcomp-\phi_p}+\frac{1}{\rcomp}}.
\end{equation*}
We remind that $\rcomp={1}/{\lambda E[T_p]}$ and $a_p\triangleq0.5\lambda{E[T_p^2]}/{E[T_p]}$.

\begin{definition}[Average job computation time]
The average job computation time is defined as the average response time to a job averaged over all workers and normalized by $P$ (because of the parallel setting),
\begin{equation*}
    D_{\text{comp}}=\frac{1}{P}\sum_{p=1}^P \frac{1}{\lambda}\paren{ \frac{a_p\phi_p^2}{\rcomp-\phi_p}+\frac{\phi_p}{\rcomp}}.
\end{equation*}
\end{definition}

We remind that $\lambda\rcomm=c_p/(I_\text{in}+I_\text{out})$ is the average job transmission rate for the $p$-th worker, $p\in\{1,\dots,P\}$, including both communication from master node to the worker and from the worker to the fusion node.

\begin{definition}[Average job communication time]
It is defined as the average job transmission time averaged over all workers and normalized by $P$ (because of the parallel setting),
\begin{equation*}
    D_{ \text{comm}} = \frac{1}{P}\sum_{p=1}^P\frac{\phi_p}{\lambda\rcomm},
\end{equation*}
\end{definition}

We remind that  $E[T_\text{enc}]$ and $E[T_\text{dec}]$ denote the average encoding time per job, i.e., generating $K\Omega$ tasks at the master node,  and decoding time per job, i.e., decoding the final job result given receiving $K$ successful task results, respectively.
\begin{definition}[Average job execution time]
The average job execution time is given by
\begin{align}\label{equ:Dexe}
D_\text{exe}&=D_{ \text{comp}}+D_{ \text{comm}}+E[T_\text{enc}]+E[T_\text{enc}].
\end{align}
\end{definition}
As the workers are heterogeneous, the way the computational load is split among them must be optimized to minimize the average job execution time\footnote{We do not consider the time-out for the task split. By an appropriate choice of time-out to reduce the failure probability, no time-out assumption is a fair assumption.}. The problem of optimal load split is given below:
\begin{opt}[Optimal Split]\label{opt:rate-split1}
\begin{align*}
    \boldsymbol{\phi}^\star =&
    \argmin{\boldsymbol{\phi}} \sum_{p=1}^P \paren{ \frac{a_p\phi_p^2}{\rcomp-\phi_p}+\left(\frac{1}{\rcomp}+\frac{1}{\rcomm}\right)\phi_p},\\
    \st\quad& \phil\leq \phi_p \leq \rcomp \quad \forall p\in\{1,\hdots,P\}, \nonumber\text{ and } \sum_{p=1}^P\phi_p=1,
\end{align*}
where the objective is minimizing the average job execution time. The constraints are: 1) $\boldsymbol{\phi}$ must be a split (i.e. be positive and sum up to one); 2) all workers must be utilized at least to a certain amount $\phil$; 3) each worker must have a stable queue.
\end{opt}

\begin{theorem}[Optimal~Split]
\label{theorem:optimal_split_comm}
The solution of Optimization Problem~\ref{opt:rate-split1} is given by,
\begin{equation}\label{equ:opt_task_split}
    \phi_p^*=\begin{cases}
     \max\left\{\rcomp \left(1-\sqrt{\frac{a_p}{a_p+\eta-\frac{1}{\rcomp}-\frac{1}{\rcomm}}}\right),\phil\right\},&\frac{1}{\rcomp}+\frac{1}{\rcomm}-a_p<\eta,\\
     \phil,&\text{otherwise.}
    \end{cases}
\end{equation}
where $\eta$ is set such that $\sum_{p=1}^P \phi_p = 1$.
\end{theorem}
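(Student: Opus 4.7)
My plan is to recognize Optimization Problem~1 as a convex program with separable objective and linear constraints, so that the KKT conditions are both necessary and sufficient for global optimality, and then to solve those conditions in closed form. First, I verify convexity of each summand $h_p(\phi_p) = a_p \phi_p^2/(\rcomp-\phi_p) + (1/\rcomp + 1/\rcomm)\phi_p$ on the interval $\phi_p\in[\phil,\rcomp)$. Performing the polynomial decomposition
\begin{equation*}
    \frac{\phi_p^2}{\rcomp-\phi_p} \;=\; -\rcomp-\phi_p+\frac{(\rcomp)^2}{\rcomp-\phi_p},
\end{equation*}
I see that $h_p''(\phi_p) = 2 a_p (\rcomp)^2/(\rcomp-\phi_p)^3 > 0$, so $h_p$ is strictly convex. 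Moreover, $h_p(\phi_p)\to\infty$ as $\phi_p\uparrow\rcomp$, which means the upper constraint $\phi_p\le\rcomp$ cannot be tight at any finite-cost minimizer and can therefore be dropped from the KKT analysis.

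Next, I form the Lagrangian
\begin{equation*}
  \mathcal{L}(\boldsymbol{\phi},\eta,\boldsymbol{\mu}) \;=\; \sum_{p=1}^P h_p(\phi_p) \;-\; \eta\Bigl(\textstyle\sum_{p}\phi_p-1\Bigr) \;-\; \sum_{p=1}^P \mu_p(\phi_p - \phil),
\end{equation*}
where $\eta\in\mathbb{R}$ is the multiplier for the equality constraint and $\mu_p\ge0$ enforces $\phi_p\ge\phil$. Using the decomposition above, stationarity $\partial\mathcal{L}/\partial\phi_p=0$ becomes
\begin{equation*}
    a_p\!\left(\frac{(\rcomp)^2}{(\rcomp-\phi_p)^2}-1\right) + \frac{1}{\rcomp}+\frac{1}{\rcomm} - \eta - \mu_p \;=\; 0.
\end{equation*}
For any worker with $\mu_p=0$ (interior with respect to the lower bound), I solve this equation for $\phi_p$: collect the square term, take the unique positive root compatible with $\phi_p<\rcomp$, and obtain
\begin{equation*}
    \phi_p \;=\; \rcomp\left(1-\sqrt{\frac{a_p}{a_p+\eta-\frac{1}{\rcomp}-\frac{1}{\rcomm}}}\right).
\end{equation*}
This matches the first branch of (\ref{equ:opt_task_split}), and the root is real exactly when $a_p+\eta-1/\rcomp-1/\rcomm>0$, i.e.\ when the condition $1/\rcomp+1/\rcomm-a_p<\eta$ of the theorem holds.

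Finally, I reconcile this with the complementary slackness and the lower-bound constraint. If the interior candidate above is at least $\phil$, it is feasible and is the optimum for that coordinate; otherwise the lower bound is active, $\phi_p^\star=\phil$, and $\mu_p>0$ absorbs the slack in the stationarity equation — this is exactly the $\max\{\cdot,\phil\}$ clause. In the remaining regime $\eta\le 1/\rcomp+1/\rcomm-a_p$, the stationarity equation has no solution with $\phi_p<\rcomp$ and $\mu_p=0$, and in fact $\partial h_p/\partial\phi_p$ is strictly positive on $(\phil,\rcomp)$, so $h_p$ is increasing and the minimum on the feasible interval is attained at $\phi_p^\star=\phil$; this gives the second branch. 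The multiplier $\eta$ is then pinned down by the single scalar equation $\sum_{p=1}^P\phi_p^\star(\eta)=1$; since each $\phi_p^\star(\eta)$ is non-decreasing and continuous in $\eta$ (and strictly increasing on any worker for which the interior branch is active), this equation has a unique solution, which by convexity and sufficiency of KKT certifies global optimality. The only non-routine step is handling the case analysis between the interior branch, the $\phil$ floor, and the ``no real root'' regime cleanly through complementary slackness; the rest is algebra.
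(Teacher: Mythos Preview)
Your proposal is correct and follows essentially the same KKT-based approach as the paper: establish strict convexity of the separable objective, invoke Slater/KKT, solve stationarity for the interior branch, and handle the lower bound via complementary slackness. The only cosmetic differences are that you drop the upper-bound multiplier upfront by observing $h_p\to\infty$ as $\phi_p\uparrow\rcomp$ (the paper carries a multiplier $\gamma_p$ and then shows $\gamma_p=0$), and you make the monotonicity-in-$\eta$ argument for uniqueness of the multiplier explicit, whereas the paper just appeals to uniqueness of the convex minimizer.
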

\noindent{\em Proof:} The proof is given in Appendix~\ref{theorem:optimal_split_comm_proof}.

Theorem~\ref{theorem:optimal_split_comm} is an analytical solution for Optimization Problem~\ref{opt:rate-split1}. We note that $\phi_p$ is non-decreasing in terms of $\eta$, and thus one can do a binary search\cite{bentley1975multidimensional,nowak2008generalized} to identify $\eta$ with a desired precision such that $\sum_{p=1}^{P}\phi_p=1$.
\begin{example}
If workers are modeled with M/M/1 queue, i.e., $a_p=\nicefrac{1}{\rcomp}$, the optimal split is,
    \begin{equation*}
    \phi_p=\begin{cases}
     \max\left\{\rcomp\left(1-\sqrt{\frac{\frac{1}{\rcomp}}{\eta-\frac{1}{\rcomm}}}\right),\phil\right\},&\frac{1}{\rcomm}<\eta,\\
     \phil,&\text{otherwise,}
    \end{cases}
    \end{equation*}
    where $\eta$ is set such that $\sum_{p=1}^P \phi_p = 1$.
\end{example}

When the transmission delay is negligible, namely, $c_p\rightarrow\infty$ and consequently $\rcomm\rightarrow\infty$, the optimal load split can obtained by the following according to Corollary~\ref{colol:optimal_split_comm}, which is the solution for a special case of Optimization Problem~\ref{opt:rate-split1}.
\begin{corollary}\label{colol:optimal_split_comm}
When the communication delay is negligent, i.e., $c_p\rightarrow\infty$ and $\rcomm\rightarrow\infty$, the optimal split is,
    \begin{equation*}
    \phi_p=\begin{cases}
     \max\left\{\rcomp\left(1-\sqrt{\frac{a_p}{a_p+\eta-\frac{1}{\rcomp}}}\right),\phil\right\},&\frac{1}{\rcomp}-a_p<\eta,\\
     \phil,&\text{otherwise.}
    \end{cases}
    \end{equation*}
    where $\eta$ is set such that $\sum_{p=1}^P \phi_p = 1$.
\end{corollary}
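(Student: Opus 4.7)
The plan is to obtain Corollary~\ref{colol:optimal_split_comm} as a direct specialization of Theorem~\ref{theorem:optimal_split_comm} rather than redoing the Lagrangian analysis from scratch. The key observation is that the quantity $\rcomm$ enters Optimization Problem~\ref{opt:rate-split1} only through the linear term $\phi_p/\rcomm$ in the objective, while the feasible set (the box constraints $\phil \le \phi_p \le \rcomp$ and the simplex condition $\sum_p \phi_p = 1$) is independent of $\rcomm$. Therefore, letting $c_p \to \infty$, and hence $1/\rcomm \to 0$, turns the objective into
\begin{equation*}
\sum_{p=1}^{P}\paren{\frac{a_p\phi_p^2}{\rcomp-\phi_p}+\frac{\phi_p}{\rcomp}},
\end{equation*}
over the same feasible set.

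First I would argue that the optimization problem is continuous in the parameter $1/\rcomm$ on the compact feasible region, so no limiting pathologies arise: the minimizer of the limiting problem coincides with the pointwise limit of $\boldsymbol{\phi}^{\star}$ as $1/\rcomm \to 0$. Next, I would invoke Theorem~\ref{theorem:optimal_split_comm} with $1/\rcomm$ replaced by $0$ throughout the closed-form expression~(\ref{equ:opt_task_split}). Substituting $1/\rcomm = 0$ into the activation condition $1/\rcomp + 1/\rcomm - a_p < \eta$ gives $1/\rcomp - a_p < \eta$, and substituting into the square-root factor $\sqrt{a_p/(a_p + \eta - 1/\rcomp - 1/\rcomm)}$ gives $\sqrt{a_p/(a_p + \eta - 1/\rcomp)}$, reproducing exactly the formula in the corollary.

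Finally, I would note that the dual variable $\eta$ (arising from the simplex constraint in the proof of Theorem~\ref{theorem:optimal_split_comm}) is again set by the normalization $\sum_{p=1}^{P}\phi_p = 1$, and the same monotonicity argument used after Theorem~\ref{theorem:optimal_split_comm} shows $\phi_p$ is non-decreasing in $\eta$, so this identifying equation has a unique solution that can be located by binary search. There is no real obstacle here: because the limit affects only a continuous linear term in the objective and leaves the constraints untouched, the reduction is essentially a one-line substitution into the already-proved theorem. The only minor point to verify is that the active-set structure behaves correctly as $1/\rcomm \to 0$, i.e., that no worker switches between the interior regime and the boundary regime in a way that would invalidate the limit; this follows because each $\phi_p^{\star}$ is continuous in $1/\rcomm$ at $0$ by the same Karush--Kuhn--Tucker argument used in Appendix~\ref{theorem:optimal_split_comm_proof}.
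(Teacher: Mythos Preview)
Your proposal is correct and matches the paper's treatment: the corollary is stated immediately after Theorem~\ref{theorem:optimal_split_comm} as the special case $\rcomm\to\infty$, obtained by setting $1/\rcomm=0$ in (\ref{equ:opt_task_split}), with no separate proof given. Your additional continuity remarks are sound but go beyond what the paper deems necessary, since the substitution is purely algebraic.
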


\begin{theorem}[Optimal~Split for $\rcomm\rightarrow\infty$ and exponential service time]\label{theorem:optimal_split_no_comm}
Let assume the workers are sorted such that $ r_1^\text{comp}\geq r_2^\text{comp}\geq\dots\geq r_P^\text{comp}$. For M/M/1 queuing model and when there is no communication delay, the optimal load split is obtained via,
    \begin{equation*}
    \phi_p=\begin{cases}
    \rcomp\left(1-\sqrt{\frac{1}{\eta\rcomp}}\right),&p\leq p^*\\
    \phil&\text{otherwise.}
    \end{cases}
    \end{equation*}
    where $p^*\in\{1,\dots,P\}$ is selected such that $\frac{r^\text{comp}_{p^*}}{(r^\text{comp}_{p^*}-\phil)^2}<\eta(p^*)\leq\frac{r^\text{comp}_{p^*+1}}{(r^\text{comp}_{p^*+1}-\phil)^2}$, $\eta=\eta(p^*)$, and the function $\eta(p^+)$ is defined as follows,
    \begin{equation*}
        \eta(p^+)\triangleq\paren{\frac{\sum_{p=1}^{p^+}\sqrt{\rcomp}}{\sum_{p=1}^{p^+}\rcomp+\sum_{p=p^++1}^{P}\phil-1}}^2.
    \end{equation*}
\end{theorem}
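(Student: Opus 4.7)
The plan is to specialize Corollary~\ref{colol:optimal_split_comm} to exponential service times and then exploit the sorting of the workers to convert the implicit maximum into an explicit threshold rule. Substituting $a_p = 1/\rcomp$ into the Corollary's formula simplifies the argument of the square root, since $a_p + \eta - 1/\rcomp = \eta$, and the branch condition $1/\rcomp - a_p < \eta$ collapses to $\eta > 0$, which holds automatically for the Lagrange multiplier attached to the equality constraint $\sum_p \phi_p = 1$ (the objective is strictly increasing in each $\phi_p$, so the multiplier must be positive). This yields the pointwise formula $\phi_p = \max\cbrace{\rcomp(1 - 1/\sqrt{\eta\rcomp}),\,\phil}$ matching the expression inside the theorem's case distinction.

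The main step is to show that the indices on which the first argument of the maximum wins form a prefix of $\cbrace{1,\dots,P}$. I would analyze the scalar function $g(r) = r - \sqrt{r/\eta}$. Its derivative $g'(r) = 1 - 1/(2\sqrt{r\eta})$ is positive exactly on $r > 1/(4\eta)$, so $g$ is not globally monotone. The key observation is that whenever $g(r) \geq \phil$ the variable $r$ already satisfies $r \geq 1/\eta$: recasting $g(r) \geq \phil$ as a quadratic in $\sqrt{r\eta}$ yields $\sqrt{r\eta} \geq (1 + \sqrt{1 + 4\eta\phil})/2 \geq 1$, hence $r \geq 1/\eta > 1/(4\eta)$. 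Thus $g$ is strictly increasing on the feasible set, and by the sorting $r_1^{\text{comp}} \geq \dots \geq r_P^{\text{comp}}$ the unconstrained expression is non-increasing in $p$; the indices with $\phi_p > \phil$ therefore form an initial segment $\cbrace{1,\dots,p^*}$.

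To characterize $p^*$ I would rewrite the strict inequality $\phi_p > \phil$ as $\eta > \rcomp/(\rcomp - \phil)^2$ by squaring $\rcomp - \sqrt{\rcomp/\eta} > \phil$. Because $r \mapsto r/(r - \phil)^2$ has derivative $-(r + \phil)/(r - \phil)^3 < 0$ on $r > \phil$, this map is strictly decreasing in $r$ and hence non-decreasing in $p$; the prefix property then produces precisely the two-sided bound $r_{p^*}^{\text{comp}}/(r_{p^*}^{\text{comp}} - \phil)^2 < \eta \leq r_{p^*+1}^{\text{comp}}/(r_{p^*+1}^{\text{comp}} - \phil)^2$ stated in the theorem.

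Finally, I would pin down $\eta$ itself from $\sum_p \phi_p = 1$. Splitting the sum into the active and inactive branches gives $\sum_{p=1}^{p^*}(\rcomp - \sqrt{\rcomp/\eta}) + (P - p^*)\phil = 1$; isolating $1/\sqrt{\eta}$ and squaring reproduces the closed form $\eta(p^*)$ of the theorem. The main obstacle is the monotonicity step in the second paragraph: because $g$ is not monotone on all of $(0,\infty)$, one has to certify that the implicit feasibility constraint $g(r) \geq \phil$ already places $r$ in the regime where $g'(r) > 0$. Once that certification is in place, the remaining derivations reduce to straightforward algebra on the KKT expression inherited from Corollary~\ref{colol:optimal_split_comm}.
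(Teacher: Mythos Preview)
Your proposal is correct and follows essentially the same route as the paper: specialize the KKT solution of Theorem~\ref{theorem:optimal_split_comm}/Corollary~\ref{colol:optimal_split_comm} to $a_p=1/\rcomp$, use the monotonicity of $r\mapsto r/(r-\phil)^2$ to turn the active-set condition into a prefix in $p$, and recover $\eta(p^*)$ from $\sum_p\phi_p=1$. Your second-paragraph detour through the monotonicity of $g(r)=r-\sqrt{r/\eta}$ is sound but unnecessary, since the threshold inequality $\eta>\rcomp/(\rcomp-\phil)^2$ you derive in the third paragraph (exactly the paper's argument) already delivers the prefix structure directly.
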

\noindent{\em Proof:} The proof is given in Appendix~\ref{theorem:optimal_split_no_comm_proof}.

%%%%%%%%%%%%%%%%%%%%%%%%%%%%%%%%%%%%%%%%%%%%%%%%%%%%%%%%%%
\subsection{Algorithm for Stream Distributed Coded Computing}

Given a set of code parameters, we first propose in our solution how to select a set of workers that provide sufficient computational resources for the setting, i.e., they are valid and their cumulative task service rate holds Assumption~\ref{assu:stability1} with a margin $\Theta$ for the $(\Omega-1)K$ supportive tasks per job (See Definition~\ref{def:TaskGeneration}).
\begin{opt}\label{opt:choosing_workers} Let ${\overline{\mathcal{P}}}$ be a large set of available workers that are sorted in a decreasing order according to their computational power. The subsets of workers $\mathcal{P}\subset{\overline{\mathcal{P}}}$ that are acquired for the distributed computation are the first valid $P$ workers in ${\overline{\mathcal{P}}}$ such that,
\begin{equation*}
    \sum_{p=1}^{P}\rcomp \geq 1+\Theta \quad\text{s.t.}\quad \Theta\geq (\Omega-1).
\end{equation*}
\end{opt}
\noindent We remind that $\rcomp= \nicefrac{1}{{\lambda E[T_p]}}$. We assume the set of given workers provide sufficient combined computational power for the job arrival rate $\lambda$, and the above optimization problem is feasible. We remind that $E[T_p]$, $p\in\{1,\dots,P\}$, depends on the code parameters. Denoting by $\mathcal{C}$ the set of options for the code parameters, each element of $\mathcal{C}$  corresponds to a choice for parameters in $\{K,I_\text{in},I_\text{out},C,E[T_\text{enc}],E[T_\text{dec}]\}$.

The overall solution can be described as follows: For any member of the set $\mathcal{C}$ (the choices for the code parameters), a subset of workers is selected according to Optimization~Problem~\ref{opt:choosing_workers}. Then, the optimal load split $\{\phi_1,\dots,\phi_P\}$ are identified using Optimization problem~\ref{opt:rate-split1}, and the corresponding average job execution time $D_\text{exe}$ is recorded. The best valid member of the set $\mathcal{C}$ (resulting in the minimum job execution) along with the corresponding optimal load split are recorded for utilization. The procedure is shown in Algorithm~\ref{alg1}, and run in initial stage at the master node or when changes occur in the system, e.g., according to the feedback, due to a worker unavailability, etc.

\begin{algorithm}
\SetAlgoLined
\KwResult{ $\{K,I_\text{in},I_\text{out},C,E[T_\text{enc}],E[T_\text{dec}]\}^*$, $\mathcal{P}^*$, and $\{\phi_1,\dots,\phi_{P}\}^*$.}
\textbf{Initialization:} $D_\text{exe}^*=\infty$.\\
\For{$\{K,I_\text{in},I_\text{out},C,E[T_\text{enc}],E[T_\text{dec}]\}\in \mathcal{C}$}{\text{ }\\
    Choose $\mathcal{P}$ using Optimization Problem~\ref{opt:choosing_workers}.\\
    Find  $\{\phi_0,\dots,\phi_{P-1}\}$ using Optimization Problem~\ref{opt:rate-split1}.\\
    Compute $D_\text{exe}$ using (\ref{equ:Dexe}).\\
    \If{$D_\text{exe}<D_\text{exe}^*$}{\text{ }\\
        $\{K,I_\text{in},I_\text{out},C,E[T_\text{enc}],E[T_\text{dec}]\}^*:=\{K,I_\text{in},I_\text{out},C,E[T_\text{enc}],E[T_\text{dec}]\}$\\
        $\{\phi_1,\dots,\phi_{P}\}^*:=\{\phi_1,\dots,\phi_{P}\}$\\
        $\mathcal{P}^*=\mathcal{P}$
    }
 }
 \caption{Algorithm for Stream Distributed Coded Computation}
 \label{alg1}
\end{algorithm}

\begin{example}[Matrix Multiplication using PolyDot]\label{exam:PolyDot}
For the PolyDot codes scheme, with any positive integers $s$ and $t$ to split the matrices as defined in Appendix~\ref{subsec:preliminaries}, the set of codes can be identified as follows:
\begin{equation*}
    \mathcal{C}=\{(s,t)|1\leq s , t \leq m \text{ and }st=m\}.
\end{equation*}
The other code parameters are given in terms of $s$ and $t$, as follows:
\begin{itemize}
    \item The number of critical tasks is $K=t^2(2s-1)$.
    \item The master node transmits $I_\text{in}=2K\Omega N^2/{st}$ symbols to designated workers per job, and the workers transmit $I_\text{out}=K\Omega N^2/t^2$ symbols to the fusion node\footnote{In case of purging, $I_\text{out}=KN^2/t^2$.}.
    \item The computational complexity of each job is $C=K\Omega{N^3}/{st^2}$.
    \item The average encoding time is
    $E[T_\text{enc}]=K\Omega N^2/\mu_\text{enc}$, and
    the average decoding time is $E[T_ \text{dec}]=(N^2K+K^3)/\mu_\text{dec}$. Here $\mu_\text{enc}$ and $\mu_\text{dec}$ denote the computational power of the master node and fusion node to perform computational powers\footnote{Note that these computational complexities correspond to specific encoding and decoding algorithms. Information regarding other encoding and decoding algorithms can be found in \cite{Dutta2020}.}.
\end{itemize}
\end{example}

%%%%%%%%%%%%%%%%%%%%%%%%%%%%%%%%%%%%%%%%%%%%%%%%%%%%%%%%%%
\section{Feedback-Based Tracking}\label{sec:feedback}
%%%%%%%%%%%%%%%%%%%%%%%%%%%%%%%%%%%%%%%%%%%%%%%%%%%%%%%%%%
In this section, we propose two feedback-based algorithms to track the state of workers and adjust the system parameters accordingly. The first algorithm given in Section~\ref{sec:worker_estimation} is for adaptive estimation of workers' statistical features. The second algorithm given in Section~\ref{sec:FB} is for posteriori job reinforcement at the workers.

%%%%%%%%%%%%%%%%%%%%%%%%%%%%%%%%%%%%%%%%%%%%%%%%%%%%%%%%%%
\subsection{Adaptive Estimation of Workers' Computational Statistical Features}\label{sec:worker_estimation}
One key feature of our solution is the adaptive feature of the load assignments based on the time-varying average service rate of the workers. To this end, the master node requires to have access to the system parameters $E[T_p]$ and $E[T_p^2]$, $p\in\{1,\dots,P\}$. These parameters are provided by the workers' declaration but are adjusted during the execution according to system realizations. In this subsection, we propose several efficient estimators for this purpose, and the right choice depends on the desired trade-offs among accuracy, information availability, and computational limits. Let $U_p$ be
the task service time of the $p$-th worker, $p\in\{1,\dots,P\}$, which is the time it takes for the $p$-th worker to finish a task. The master node, through feedback, has access to realizations of $U_p$ over the time, and intends to adjust the estimations of $E[T_p]$ and $E[T_p^2]$ accordingly.

Let parameters $E_p(l)$ and $S_p(l)$ be estimations of first and second moments of $U_p$ at time $l$. We identify $E_p(l)$ and $S_p(l)$ according to realizations of $U_p$ up to and including time $l$, i.e., $\{u_p(1),\dots,u_P(l)\}$,
\begin{align*}
    E_p(t)&:=(1-\alpha)E_p(l-1)+\alpha u_p(l),\\
    S_p(t)&:=(1-\beta)S_p(l-1)+\beta u_p^2(l).
\end{align*}
These estimation rules are based on the considered problem setting where the first moment and second moment of the time it takes for the $p$-th worker to finish $\phi_p$ portion of a job is $\phi_p E[T_p]$ and $\phi_p^2 E[T_p^2]$, respectively, and one can consider a different appropriate estimation rules for a different problem setting. Here, $\alpha$ and $\beta$ are forgetting factors that determine the importance of old realizations in estimations. If $\alpha=\beta=1/l$, then all previous samples have equivalent importance in estimations.

Alternatively, the parameters $E_p(l)$ and $S_p(l)$ can be updated each $\Lambda$ time steps according to the following rules,
\begin{align*}
    E_p(l)&:=(1-\alpha)^\Lambda E_p(l-\Lambda)+\sum_{i=1}^{\Lambda}(1-\alpha)^{\Lambda-i}\alpha^{i} u_p(l-i+1),\\
    S_p(t)&:=(1-\beta)^\Lambda S_p(l-\Lambda)+\sum_{i=1}^{\Lambda}(1-\beta)^{\Lambda-i}\beta^{i} u_p^2(l-i+1).
\end{align*}
Finally, considering each job consists of $K\Omega$ tasks with no purging, we adjust estimations of $E[T_p]$ and $E[T_p^2]$ at time step $l$ as follows,
\begin{equation}
    \begin{split}
        E[T_p]= \frac{E_P(l)}{K\Omega},\quad\text{ and }\quad E[T_p^2]\simeq \frac{S_P(l)}{K^2\Omega^2}.
    \end{split}
\end{equation}
The approximation comes from the fact that, in practice, the tasks are performed one by one at the workers. Thus, the time it takes for the $p$-th worker to perform one job is the time it takes for the $p$-th worker to perform $\phi_p K\Omega$ sequential tasks which is summation of several random variables that may have different distribution than the distribution of one of these random variables. According to our empirical results and analytical analysis, these approximations are realistic especially when $K\Omega\phil\gg1$, $p\in\{1,\dots,P\}$.
\begin{example}
Consider $U_p$, $p\in\{1,\dots,P\}$, which is the time it takes for the $p$-th worker to finish a task, has an exponential distribution with parameter $K\Omega\tilde{\mu}_p/C$. Then, $T_p(\phi_p)$, $p\in\{1,\dots,P\}$, which is the time it takes for the $p$-th worker to finish $\phi_p$ portion of a job or equivalently $K\Omega\phi_p$ tasks, i.e., has a Gamma distribution with shape $K\Omega\phi_p$ and scale $\nicefrac{C}{K\Omega\tilde{\mu}_p}$. In this setting, we have
\begin{equation*}
    E[T_p(\phi_p)]=\phi_p\frac{C}{\tilde{\mu}_p}=\phi_p E[T_p],\quad\text{and}\quad E[T_p(\phi_p)^2]=\phi_p\left(\phi_p+\frac{1}{K\Omega}\right)\frac{C^2}{\tilde{\mu}_p^2}\simeq \phi_p^2E[T_p^2],
\end{equation*}
The above approximation is valid if $\phi_p\gg 1/K\Omega$, $p\in\{1,\dots,P\}$. Since, we have $\phi_p\geq\phil$, then $K\Omega\phil\gg1$ ensures the approximations are valid.
\label{expl:gamma}
\end{example}

Moreover, the considered $T_p$ as above estimation is an upper bound since the fusion node only requires $K$ task results to obtain the final job result. In case, the master node remove the tasks related to a resolved job from the queue of the master node (purging), each job consists of less then $K\Omega$ tasks. In this case, one can incorporate the purging probability into the estimation and refine the analysis.

%%%%%%%%%%%%%%%%%%%%%%%%%%%%%%%%%%%%%%%%%%%%%%%%%%%%%%%%%%
\subsection{Feedback-Based Algorithm for Posteriori Job Reinforcement}\label{sec:FB}

In the distributed computation scheme considered, the master node tracks the service rate of workers via the feedback from the workers or fusion node. The optimal task split optimization enables us to optimally split the $K\Omega$ tasks amongst the workers at the time the job arrives. Here, we propose an efficient posteriori adaptive algorithm to adjust the load split according to the workers' current service rate during the job's computation period. The adaptive algorithm proposed can minimize the in-order delivery delay of jobs while maximizing the utilization of the workers.

Let $\mathcal{A}_j(l)$ be a counter at the master that records the number of successful task results related to the $j$-th job up to and including time step $l$, obtained using the feedback information. We consider the case that $\mathcal{A}_j(l)<K$, and thus the job is not resolved at the fusion node. We denote as $l_j$ the time that $j$-th job arrives at the master node and is distributed among the workers, and $l>l_j$ the time that a feedback related to $j$-th job is received by the master node. Besides, $E_l[T_p]$ is the estimation of the job service time for the $p$-th worker at time step $l$. We also denote with $\mathcal{P}(l,j)$ the set of workers that are working on $j$-th job at time step $l$\footnote{If $l>l_{j+1}$, these set of workers are the stragglers for the $j$-th job}. We define $\mu_E^j(l)$ and $\mu_R^j(l)$ as the expected computation rate and realized computation rate of $j$-th job for the workers in $\mathcal{P}(l,j)$, and they are obtained as follows
\begin{equation*}
    \mu_E^j(l)\triangleq\frac{1}{|\mathcal{P}(l,j)|}\sum_{p\in\mathcal{P}(l,j)}\frac{\phi_p}{E_{l_j}[T_p]},\quad\mu_R^j(l)\triangleq\frac{1}{|\mathcal{P}(l,j)|}\sum_{p\in\mathcal{P}(l,j)}\frac{\phi_p}{E_l[T_p]}.
\end{equation*}
Here $\{\phi_1,\dots,\phi_p\}$ are based on the optimal task split upon arrival of the $j$-th job.\\

\begin{algorithm}
\SetAlgoLined
\textbf{Initialization:} $\mu_{FB}^{j}=0$ for each received job.\\
At each time step $l$:\\
 \For{$j\in\{1,\dots,\mathcal{N}(l)\}$}{\text{ }\\
    \If{$\mathcal{A}(j)<K$\vspace{0.2cm}}
    {
    $\mu_{\Delta(j)}=\mu_{E}^{j}(l)+\mu_{FB}^{j}-\mu_{R}^{j}(l)$\vspace{0.2cm}\\
    \If{$\mu_{\Delta(j)} \geq \theta$\vspace{0.3cm}}
    {
        Identify the subset of the workers, $\mathcal{P}^{c}$ that work on jobs with indices $>j$.\vspace{0.2cm}\\
        { }$\mathcal{P}_j=\argmax{\mathcal{P}_s\subset\mathcal{P}^{c}}\sum_{p\in\mathcal{P}_s}\frac{1}{E_l[T_p]},\quad\text{s.t.}\quad \mu_{N}^{j} = \sum_{p\in\mathcal{P}_s} \frac{1}{E_l[T_p]}\leq \mu_{\Delta(j)}$\vspace{0.2cm}\\
        $M_j\triangleq\max\{ \left(K(\Omega-1)-\mathcal{A}_j(l)\right)\mu_{N}^{j},0\}$\vspace{0.2cm}\\
        The master node assigns $\left\lceil\frac{M_j}{\mu_{N}^{j}E_l[T_p]}\right\rceil$ tasks related to the $j$-th job to the $p$-th worker in $\mathcal{P}_j$ .\vspace{0.2cm}\\
        $\mu_{FB}^j:=\mu_{FB}^j+\mu_{N}^{j}.$\vspace{0.2cm}\\
    }
    }
 }
 \caption{Feedback-Based Algorithm for Posteriori Job Reinforcement}
 \label{alg_adp}
\end{algorithm}
\vspace{0.5cm}

Then, $\mu_R^j(l)-\mu_E^j(l_j)$ quantifies the missing service rate due to the stragglers related to the $j$-th job up to and including time step $l$. Upon arrival of any feedback related to the $j$-th job at time step $l$, $\mu_{\Delta(j)}$ is computed according to the following rule,
\begin{equation*}
    \mu_{\Delta(j)}=\mu_{E}^{j}(l)+\mu_{FB}^{j}-\mu_{R}^{j}(l),
\end{equation*}
where the parameter $\mu_{FB}^{j}$ indicates the adjusted computation rate that have been considered so far for the $j$-th job according to feedback. Then, if $\mu_{\Delta(j)}>\theta$, (here $\theta$ is a pre-defined threshold), the master node attempts to assign appropriate number of reinforcing tasks related to this job to an appropriately selected subset of workers $\mathcal{P}_j$ at beginning of their queue.

Let $\mathcal{P}^c$ be a subset of workers that have started to work on subsequent jobs\footnote{It can also include the workers that have worked faster than expected, i.e., $E[T_p]<E_j[T_p]$.}, i.e., $$\mathcal{P}_j^c=\mathcal{P}\setminus\mathcal{P}(l,j)\setminus\mathcal{P}(l,j-1)\setminus\dots\setminus\mathcal{P}(l,1).$$
The appropriately selected subset of workers $\mathcal{P}_j$ to reinforce the $j$-th job is obtained using the following optimization problem,
\begin{equation*}
    \mathcal{P}_j=\argmax{\mathcal{P}_s\subset\mathcal{P}^{c}}\sum_{p\in\mathcal{P}_s}\frac{1}{E_l[T_p]},\quad\text{s.t.}\quad \mu_{N}^{j} = \sum_{p\in\mathcal{P}_s} \frac{1}{E_l[T_p]}\leq \mu_{\Delta(j)}
\end{equation*}
Then, the parameter $M_j$ is set as follows,
\begin{equation*}
    M_j\triangleq\max\{ \left(K(\Omega-1)-\mathcal{A}_j(l)\right)\mu_{N}^{j},0\}.
\end{equation*}
which shows the number of reinforcing tasks for the $j$-th job. Next, $\mu_{FB}^j$ is updated as follows,
\begin{equation*}
    \mu_{FB}^j:=\mu_{FB}^j+\mu_{N}^{j}.
\end{equation*}
The adaptive posteriori procedure is given as Algorithm~\ref{alg_adp}. {Note that in solutions considered in the literature (e.g. in \cite{Dutta2020,mallick2019rateless,dutta2019short}) and also here in previous sections without the adaptive algorithm, if some workers stop/halt during the execution, i.e., their service rate approaches zero for some period of time, the system will be prone to some failure probability. Using the proposed adaptive algorithm, we will compensate for this missing service rate due to the workers that stop to work by other workers.}

%%%%%%%%%%%%%%%%%%%%%%%%%%%%%%%%%%%%%%%%%%%%%%%%%%%%%%%%%%
\section{Numerical Results and Discussion}\label{sec:sim}
%%%%%%%%%%%%%%%%%%%%%%%%%%%%%%%%%%%%%%%%%%%%%%%%%%%%%%%%%%
In this section, we evaluate the performance of the proposed joint coding and scheduling framework for stream distributed coded  computation. We first evaluate the method for selecting a subset of the workers from the available ones $\overline{\mathcal{P}}$. Then, we compare the performance of our framework in light of the trade-off between computational load and execution time. The coding scheme we consider here is the PolyDot code (see Appendix~\ref{subsec:preliminaries} and Example~\ref{exam:PolyDot}) with parameters $N=100$, $m=50$. Therefore, the set of all codes can be described as
$\mathcal{C}=\{(s,m/s)|m\text{ is divisible by } s  \}$. For sake of illustration's clarity, we have considered $s$ can be any integer in $\{1,\dots,m\}$, and in practice, one should pick a value of $s$ that is closest to the optimal value and
divides $m$.
%We assume that if a pair $(s,t)$ is selected, one is able to find parameters $(s^{\dagger},t^{\dagger},m^{\dagger})$ close to the original ones such that it forms a valid code\footnote{For instance, if $N=100$, $m=50$, and we would like to have $(s,t)=(26,1.923)$, then a valid code would be $(s^{\dagger},t^{\dagger},m^{\dagger})=(26,2,52)$.}.
In this set of simulations, the task service time of each worker is drawn for an exponential distribution with rate parameter $\Tilde{\mu}_p$ operations per time step. As discussed in Example~\ref{expl:gamma}, this results in a job service time following a Gamma distribution.

%%%%%%%%%%%%%%%%%%%%%%%%%%%%%%%%%%%%%%%%%%%%%%%%%%%%%%%%%%
\subsection{Evaluation of the Initial Choice of workers}
\begin{figure}
  \centering
    \scalebox{0.87}{\hspace{-0.8cm}\input{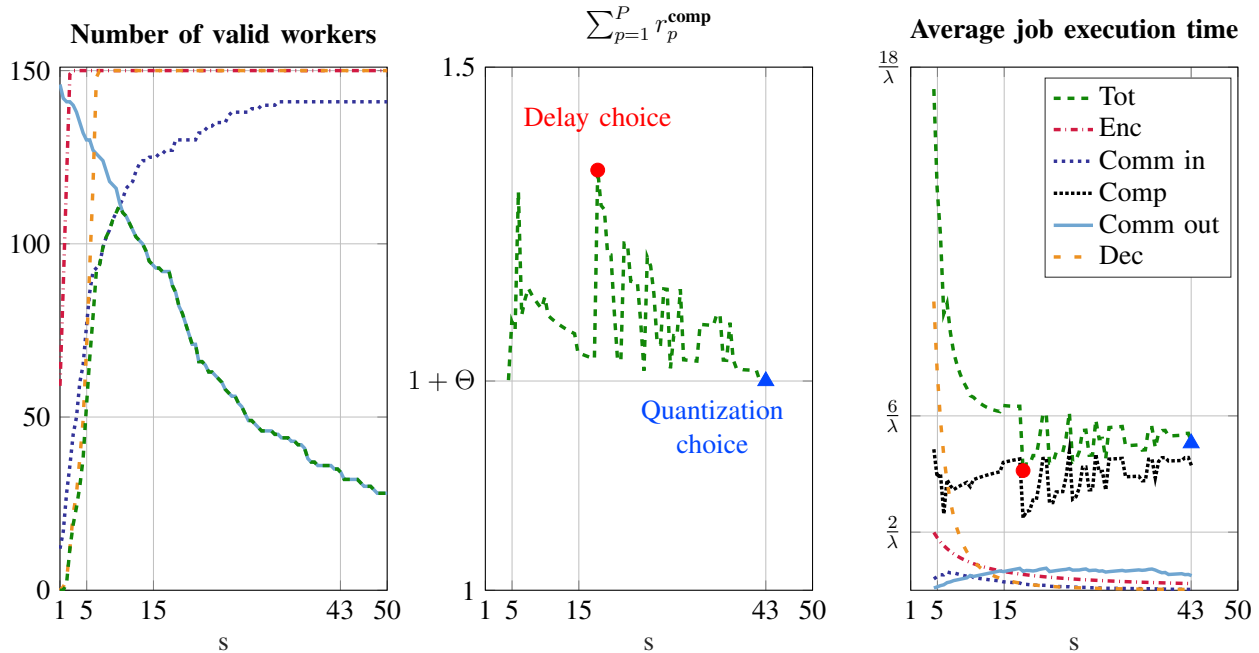}}
  \caption{Exploring the effect of the initial set of workers on the system parameters and performance. (a): The number of valid and partially valid workers. (b): Total computational rate of the selected workers versus the code parameter $s$. (c): The average job execution time versus the code parameter $s$, along with its constituent parts, when the stability conditions hold.}
  \label{fig:initial_choice}
\end{figure}
First, we illustrate and evaluate the impact of our method for choosing the workers. In this set of simulations, the rate parameter $\tilde{\mu}_p$ of each worker is uniformly and independently selected in $[0,1000]$ operations per time step, the transmission rate $c_p$ of each worker is uniformly and independently selected in $[0,200]$ symbols per time step, $\mu_\text{enc} = 10000$ operations per time step, and $\mu_\text{dec} = 100000$ operations per time step. Given the system parameters, one can easily find $1/E[T_p]$, $c_p/I_\text{in}$, and $c_p/I_\text{out}$ for each worker in $\overline{\mathcal{P}}$, along with $1/E[T_\text{enc}]$ and $1/E[T_\text{dec}]$. We remind that the set of valid workers is a subset of $\overline{\mathcal{P}}$ such that computation is the bottleneck for each one of them. In other words, the $p$-th worker in $\overline{\mathcal{P}}$ is valid if $1/E[T_p]$ is smaller than $c_p/I_\text{in}$, $c_p/I_\text{out}$, $1/E[T_\text{enc}]$, and $1/E[T_\text{dec}]$, see Definition~\ref{def:valid_coding}.

Fig.~\ref{fig:initial_choice}(a) explores a realization of $\overline{\mathcal{P}}$ that consists of $150$ workers with respect to the code parameter $s$. The red dashed-dotted curve shows the number of workers such that their computation rate is lower than the encoding rate, i.e., $\{p\in\overline{\mathcal{P}}\text{ | }1/E[T_p]<1/E[T_\text{enc}]\}$. Similarly, the dashed dark blue, resp. solid light blue and dashed orange, curve shows the number of workers such that their computation rate is lower than their incoming traffic rate, resp., outgoing traffic rate and the decoding rate. The dashed green curve shows the number of valid workers which stands below all these curves. We observe that for small values of $s$, the decoding defines an upper bound for the number of valid workers, while for large values of $s$, the outgoing traffic rate defines an upper bound.
%Between these two regimes, i.e., i.e. when $5\leq s\leq 10$, incoming traffic rate defines .
The highest number of valid workers is obtained when $s=10$.

Once the valid workers are identified, a subset of them is selected to serve the queue of in-order jobs at the master node, using Optimization~Problem~\ref{opt:choosing_workers}. Fig.~\ref{fig:initial_choice}(b) shows the value of $\sum_{p=1}^{P}\rcomp$, which must be equal or greater than $1+\Theta$, with respect to the code parameter $s$ when $\lambda=10^{-3}$. We exclude those values of $s$ such that the valid workers do not provide sufficient resources to serve the queue of the jobs, i.e. when Optimization~Problem~\ref{opt:choosing_workers} is not feasible. This happens in this example when $s<5$ or $s>43$. Thus, we restrict ourselves to admissible values of $s$ that result in feasible optimizations.

The optimal choice of $s$ is the one that results in the minimum average job execution time.
Fig.~\ref{fig:initial_choice}(c) shows the average job execution time with {dashed green line} for various admissible values of $s$. Moreover, it shows the constituent components of the average job execution time, i.e., encoding time, communication time from the master node to workers, computation time at the workers, communication time from the workers to the fusion node, and decoding time. We observe that the average job execution time with respect to $s$ has several local minima, and selecting any of them can be a viable choice. The fluctuations in the curve is due to the quantization errors that enter into the design in \Cref{opt:choosing_workers}. Hence, in this example, choosing any value of $s$ between $15$ and $43$ is reasonable. Several criteria can be considered for selecting the optimal $s$. First, one can choose $s$ leading to the minimum delay (computation time), as depicted by a red point in the figure. This case will likely correspond to the highest quantization error (i.e., reduced utilization). Another choice is to select $s$ such that the delay remains reasonable while minimizing the quantization error (i.e., maximum utilization), as depicted by a blue triangle in the figure.
%This case could be of interest if the workers must be bought according to their computational power, and when therefore quantization errors translate into an increase of the costs.
Finally, one should also take into account that $s$ must divide $m$ in order to obtain a valid code.
%Other criteria can be designed and we have not pushed this discussion further, keeping it as future work.

%%%%%%%%%%%%%%%%%%%%%%%%%%%%%%%%%%%%%%%%%%%%%%%%%%%%%%%%%%
\subsection{Trade-off between computational load and average job execution time}

\begin{figure}
     \centering
     \begin{subfigure}[b]{0.48\textwidth}
         \centering
         \scalebox{0.72}{% This file was created by matlab2tikz.
%
%The latest updates can be retrieved from
%  http://www.mathworks.com/matlabcentral/fileexchange/22022-matlab2tikz-matlab2tikz
%where you can also make suggestions and rate matlab2tikz.
%
\definecolor{ginipurg}{rgb}{0.55000,0.710000,0.00000}%
\definecolor{ginino}{rgb}{0.00000,0.26000,0.15000}%
\definecolor{unipurg}{rgb}{0.00000,0.00000,1.00000}%
\definecolor{unino}{rgb}{0.00000,0.50000,0.50000}
\definecolor{purg}{rgb}{0.640000,0.00000,0.00000}%
\definecolor{no}{rgb}{1.00000,0.20000,0.00000}%
\begin{tikzpicture}

\begin{axis}[%
width=9cm,
height=7cm,
at={(0.5cm,0.5cm)},
scale only axis,
xmin=0,
xmax=60000,
xlabel style={font=\color{white!15!black}},
xlabel={Average in-order execution time},
xtick={0,10000,30000,50000,60000},
xticklabels={{0},{$\frac{10}{\lambda}$},{$\frac{30}{\lambda}$},{$\frac{50}{\lambda}$},{$\frac{60}{\lambda}$}},
scaled x ticks = false,
ymin=1,
ymax=1.7,
ytick={1,1.1,1.4,1.7},
yticklabels={{1},{$1.1$},{$1.4$},{$1.7$}},
ylabel style={font=\color{white!15!black}},
ylabel={Computational load},
axis background/.style={fill=white},
title style={font=\bfseries},
title={Average},
xmajorgrids,
ymajorgrids,
legend style={legend cell align=left, align=left, draw=white!15!black}
]

\addplot [color=unino,line width=1.5pt, dashed, mark=triangle*, mark options={solid, fill=unino, unino}]
  table[row sep=crcr]{%
57488.0085195251	1.01419763037975\\
56070.6025989696	1.09049681012657\\
25279.021988414	1.14301974683545\\
11132.7252678585	1.20886618734177\\
5221.64555000833	1.27180459746835\\
2299.31971764917	1.34642292658228\\
1161.76333820794	1.42195823797468\\
907.961096357907	1.4991571443038\\
600.29809096053	1.5769993721519\\
515.613237185697	1.64535291139241\\
};
%\addlegendentry{uni without purging}

\addplot [color=unipurg,line width=1.5pt, mark=triangle*, mark options={solid, fill=unipurg, unipurg}]
  table[row sep=crcr]{%
47175.2392495251	1.00693738758611\\
13129.6691695251	1.01190935878813\\
3761.78254952515	1.01679311392405\\
1705.25072952515	1.02186458734177\\
988.750659525146	1.02615882531646\\
734.183859525146	1.0315613164557\\
617.356199525146	1.03669670886076\\
559.654310306709	1.04130681518987\\
527.325319525147	1.04596042531646\\
509.394639525146	1.04991594936709\\
};
%\addlegendentry{uni with purging}

\addplot [color=no,line width=1.5pt, mark=+, mark size = 3pt, dashed, mark options={solid, fill=no, no}]
  table[row sep=crcr]{%
813.721489525147	1.00520644050633\\
743.116628969592	1.08539511898734\\
688.758228414036	1.15899511898734\\
649.422867858479	1.23892731139241\\
613.852937302924	1.31306920506329\\
597.393465144163	1.39244208607595\\
564.449226191814	1.47259163544304\\
538.19842367233	1.54640032405063\\
512.706389540986	1.62462225822785\\
495.335199916267	1.6950950886076\\
};
%\addlegendentry{our without purging}

\addplot [color=purg,line width=1.5pt, mark=+,mark size = 3pt, mark options={solid, fill=purg, purg}]
  table[row sep=crcr]{%
816.259729525146	1.00505203037975\\
763.664609525146	1.03276565063291\\
705.709999525146	1.04198837468355\\
675.290949525146	1.04867264810127\\
645.782859525146	1.05324962025316\\
629.153299525147	1.05641940253165\\
615.261219525147	1.05955532151899\\
602.319679525147	1.06130738227848\\
603.665309525147	1.06311185822785\\
583.453759525147	1.06338478987342\\
};
%\addlegendentry{our with purging}
\addplot [color=ginino,line width=1.5pt, dashed, mark=*, mark options={solid, fill=ginino, ginino}]
  table[row sep=crcr]{%
592.963329525146	1.00455696202532\\
465.767028969592	1.08516998481012\\
441.432284866942	1.15771633417723\\
430.711314093298	1.23832668354431\\
417.046952404965	1.31070371645569\\
410.07402515073	1.39147997974683\\
401.080289106611	1.47196070886077\\
393.321006564332	1.54346693670887\\
389.416475080701	1.62414655189872\\
382.643290837831	1.69257778227847\\
};
%\addlegendentry{gini without purging}

\addplot [color=ginipurg,line width=1.5pt, mark=*, mark options={solid, fill=ginipurg, ginipurg}]
  table[row sep=crcr]{%
593.918109525146	1.00455696202532\\
500.292509525147	1.04517257721519\\
504.054509525146	1.05321073417722\\
505.303349525147	1.05338677468354\\
504.607899525146	1.05344437468354\\
502.911529525146	1.05346730126582\\
504.973389525146	1.05340864810127\\
502.308309525147	1.05354653164557\\
503.833329525147	1.05359278987342\\
502.321139525147	1.05340962025316\\
};
%\addlegendentry{gini with purging}
 \node at (12000,1.55) {\textcolor{ginino}{Ideal} and \textcolor{no}{Optimal} };
%\node at (800,1.3) {};
\node at (50000,1.15) {\textcolor{unino}{Uniform}};

\node at (12000,1.1) {\textcolor{ginipurg}{Ideal} and \textcolor{purg}{Optimal}};
\node at (12000,1.05) {(purging)};
\node at (40000,1.05) {\textcolor{unipurg}{Uniform (purging)}};

\end{axis}
\end{tikzpicture}%}
            \caption{Average execution time versus computational load}
            \label{fig:meanexedelay}
     \end{subfigure}
     \hfill
     \begin{subfigure}[b]{0.48\textwidth}
         \centering
         \scalebox{0.72}{% This file was created by matlab2tikz.
%
%The latest updates can be retrieved from
%  http://www.mathworks.com/matlabcentral/fileexchange/22022-matlab2tikz-matlab2tikz
%where you can also make suggestions and rate matlab2tikz.
%
\definecolor{ginipurg}{rgb}{0.55000,0.710000,0.00000}%
\definecolor{ginino}{rgb}{0.00000,0.26000,0.15000}%
\definecolor{unipurg}{rgb}{0.00000,0.00000,1.00000}%
\definecolor{unino}{rgb}{0.00000,0.50000,0.50000}
\definecolor{purg}{rgb}{0.640000,0.00000,0.00000}%
\definecolor{no}{rgb}{1.00000,0.20000,0.00000}%
\begin{tikzpicture}

\begin{axis}[%
width=9cm,
height=7cm,
at={(0.5cm,0.5cm)},
scale only axis,
xmin=0,
xmax=2000,
xlabel style={font=\color{white!15!black}},
xlabel={Average in-order execution time},
xtick={0,333,500,1000,2000},
xticklabels={{0},{$\frac{1}{3\lambda}$},{$\frac{1}{2\lambda}$},{$\frac{1}{\lambda}$},{$\frac{2}{\lambda}$}},
scaled x ticks = false,
ymin=1,
ymax=1.7,
ytick={1,1.1,1.4,1.7},
yticklabels={{1},{$1.1$},{$1.4$},{$1.7$}},
ylabel style={font=\color{white!15!black}},
ylabel={Computational load},
axis background/.style={fill=white},
title style={font=\bfseries},
title={Average (zoomed-in)},
xmajorgrids,
ymajorgrids,
legend style={legend cell align=left, align=left, draw=white!15!black}
]
\addplot [color=ginino,line width=1.5pt, dashed, mark=*, mark options={solid, fill=ginino, ginino}]
  table[row sep=crcr]{%
592.963329525146	1.00455696202532\\
465.767028969592	1.08516998481012\\
441.432284866942	1.15771633417723\\
430.711314093298	1.23832668354431\\
417.046952404965	1.31070371645569\\
410.07402515073	1.39147997974683\\
401.080289106611	1.47196070886077\\
393.321006564332	1.54346693670887\\
389.416475080701	1.62414655189872\\
382.643290837831	1.69257778227847\\
};
%\addlegendentry{gini without purging}

\addplot [color=unino,line width=1.5pt, dashed, mark=triangle*, mark options={solid, fill=unino, unino}]
  table[row sep=crcr]{%
57488.0085195251	1.01419763037975\\
56070.6025989696	1.09049681012657\\
25279.021988414	1.14301974683545\\
11132.7252678585	1.20886618734177\\
5221.64555000833	1.27180459746835\\
2299.31971764917	1.34642292658228\\
1161.76333820794	1.42195823797468\\
907.961096357907	1.4991571443038\\
600.29809096053	1.5769993721519\\
515.613237185697	1.64535291139241\\
};
%\addlegendentry{uni without purging}

\addplot [color=unipurg,line width=1.5pt, mark=triangle*, mark options={solid, fill=unipurg, unipurg}]
  table[row sep=crcr]{%
47175.2392495251	1.00693738758611\\
13129.6691695251	1.01190935878813\\
3761.78254952515	1.01679311392405\\
1705.25072952515	1.02186458734177\\
988.750659525146	1.02615882531646\\
734.183859525146	1.0315613164557\\
617.356199525146	1.03669670886076\\
559.654310306709	1.04130681518987\\
527.325319525147	1.04596042531646\\
509.394639525146	1.04991594936709\\
};
%\addlegendentry{uni with purging}

\addplot [color=no,line width=1.5pt, mark=+, mark size = 3pt, dashed, mark options={solid, fill=no, no}]
  table[row sep=crcr]{%
813.721489525147	1.00520644050633\\
743.116628969592	1.08539511898734\\
688.758228414036	1.15899511898734\\
649.422867858479	1.23892731139241\\
613.852937302924	1.31306920506329\\
597.393465144163	1.39244208607595\\
564.449226191814	1.47259163544304\\
538.19842367233	1.54640032405063\\
512.706389540986	1.62462225822785\\
495.335199916267	1.6950950886076\\
};
%\addlegendentry{our without purging}

\addplot [color=purg,line width=1.5pt, mark=+,mark size = 3pt, mark options={solid, fill=purg, purg}]
  table[row sep=crcr]{%
816.259729525146	1.00505203037975\\
763.664609525146	1.03276565063291\\
705.709999525146	1.04198837468355\\
675.290949525146	1.04867264810127\\
645.782859525146	1.05324962025316\\
629.153299525147	1.05641940253165\\
615.261219525147	1.05955532151899\\
602.319679525147	1.06130738227848\\
603.665309525147	1.06311185822785\\
583.453759525147	1.06338478987342\\
};

\addplot [color=ginipurg,line width=1.5pt, mark=*, mark options={solid, fill=ginipurg, ginipurg}]
  table[row sep=crcr]{%
593.918109525146	1.00455696202532\\
500.292509525147	1.04517257721519\\
504.054509525146	1.05321073417722\\
505.303349525147	1.05338677468354\\
504.607899525146	1.05344437468354\\
502.911529525146	1.05346730126582\\
504.973389525146	1.05340864810127\\
502.308309525147	1.05354653164557\\
503.833329525147	1.05359278987342\\
502.321139525147	1.05340962025316\\
};
%\addlegendentry{gini with purging}
%\addlegendentry{our with purging}

 \node at (200,1.55) {\textcolor{ginino}{Ideal}};
\node at (800,1.3) {\textcolor{no}{Optimal}};
\node at (1600,1.45) {\textcolor{unino}{Uniform}};

\node at (200,1.1) {\textcolor{ginipurg}{Ideal}};
\node at (200,1.05) {\textcolor{ginipurg}{(purging)}};
\node at (900,1.12) {\textcolor{purg}{Optimal (purging)}};
\node at (1600,1.07) {\textcolor{unipurg}{Uniform (purging)}};

\end{axis}
\end{tikzpicture}%}
     \caption{Average execution time versus computational load}
         \label{fig:meanexedelay_zoom}
     \end{subfigure}
     \hfill
     \begin{subfigure}[b]{0.48\textwidth}
         \centering
         \scalebox{0.72}{\input{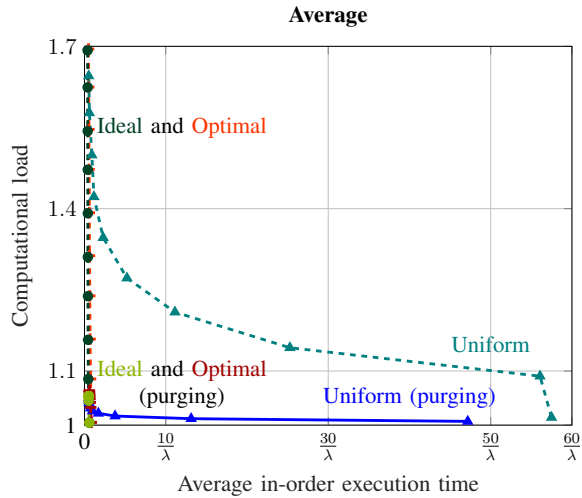}}
     \caption{Realizations of execution time versus computational load}
         \label{fig:real}
     \end{subfigure}
     \hfill
     \begin{subfigure}[b]{0.48\textwidth}
         \centering
         \scalebox{0.72}{\input{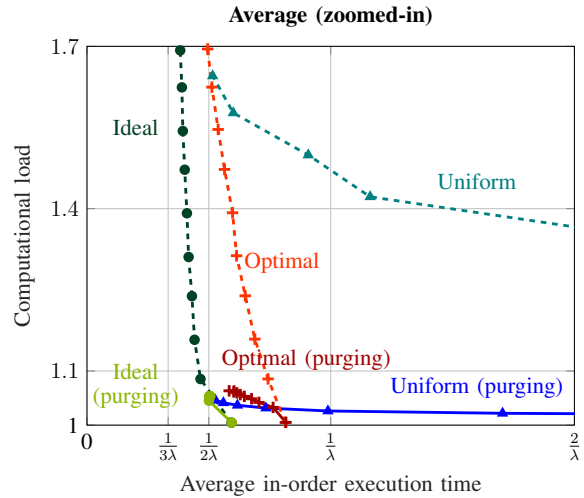}
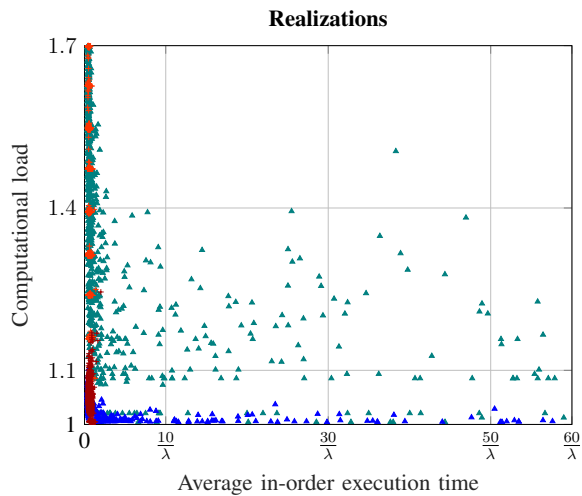
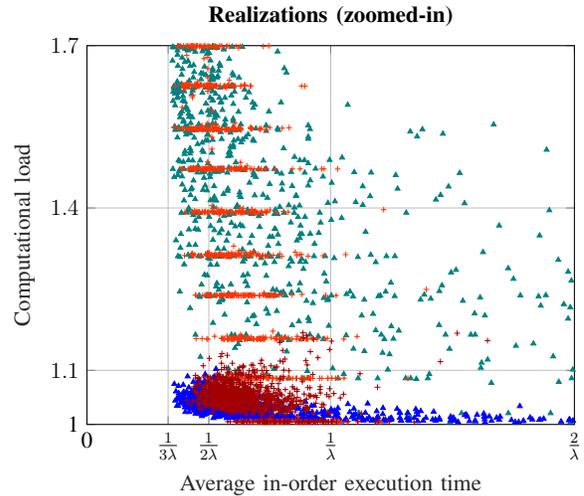}
     \caption{Realizations of execution time versus computational load}
         \label{fig:real_zoom}
     \end{subfigure}
         \caption{Trade-off between computational load and job execution time for a cluster of heterogeneous workers. The right figures are zoomed-in versions of the left ones. The bottom figures highlight $100$ realizations out of the $500$ which have been used to obtain the average execution time for the above ones (For the sake of clarity, the realizations of the ideal case have not been drawn.). The performances have been averaged over $500$ realizations with $200$ jobs each,  $\lambda=10^{-3}$, $m=50$, $N=100$, $s=40$, $t=10$, $\Theta=2$, and the forgetting factor of the mean and variance estimations equal to $0.01$. The parameter $\Omega$ changes in $[1,1.7]$. Finally, we note that $\frac{1}{3\lambda}$ is the inverse of the total service rate. This delay corresponds to the computation delay (i.e. neglecting communication, encoding and decoding) when jobs are served instantaneously (i.e. neglecting the time in the queue).}
    \label{fig:trade_off_delay_comp_load}
\end{figure}

Here, we investigate the performance of our framework with respect to the performances of two extreme baselines: 1) Naive load split where the computational load of a job is distributed uniformly among several workers regardless of their differences. 2) Ideal load split\footnote{ Since, in our work, we assume the communication delay exists, the ideal split works in a  non-causal way such that the next task arrives precisely when the worker becomes idle.} where the master node gets informed when a worker finishes its assigned task and outsources another task to it. In this set of simulations, the rate parameter $\tilde{\mu}_p$ of each worker is uniformly and independently selected in $[0,2500]$ operations per time step, the transmission rate $c_p$ of each worker, is uniformly and independently selected in $[0,1000]$ symbols per time step, $\mu_\text{enc} = 10^5$ operations per time step, and $\mu_\text{dec} = 10^6$ operations per time step \footnote{In this set of simulations, we do not use Optimization~Problem~\ref{opt:choosing_workers} to select the workers to investigate a heterogeneous cluster of workers. Hence, we are facing fast and slow workers.}.

We select $\Theta=2$, i.e., $\sum_{p=1}^P\rcomp=3$, and consequently, the set of workers can serve a queue of jobs with average arrival rate up to $3\lambda$. In fact, $\Theta$ affects the number of feasible solutions for  Optimization Problem~\ref{opt:rate-split1}, and a lower $\Theta$ lowers the impact of the optimal load split. We first remind that $D_\text{exe}$ is the average execution time per job, and $\Omega$ determines the computational load, i.e., the ratio of the total amount of computations (including the redundant computations) to necessary amount of computations. Fig.~\ref{fig:trade_off_delay_comp_load} depicts the computational load versus average job execution time for various interesting scenarios where $\lambda = 10^{-3}$ and the redundancy ratio $\Omega$ varies in interval $[1,1.7]$. We highlight that $\sum_{p=1}^{P}\rcomp$, i.e., the total computational rate of the workers per job, depends on $\Omega$, which determines the number of tasks per job and the selected set of workers to serve $K\Omega$ tasks by job.

We first note that the \textit{uniform split}, i.e., $\phi_p=1/P$ and $p\in\{1,\dots,P\}$, does not necessarily result in a stable solution as confirmed by our simulation results. The \textit{ideal split}, i.e., genie-aided, has a time variant load split, such that the master node assigns a task to a worker as soon as the worker declares the previous task has finished. Lastly, the \textit{optimal split}, is obtained using the proposed framework in this paper. Purging is an optimal feature where the master node asks all workers to drop their tasks related to a job once the decoder is able to decode the job result using the successful tasks up to that time. Fig.~\ref{fig:trade_off_delay_comp_load} shows the trade-off between the average job execution time and the computational load for the various split schemes with and without purging. In case, there is no purging, the computational load scales proportional to $\Omega$ as all tasks are computed\footnote{The uniform solution seems to have a lower computational load than the other methods at high $\Omega$. This is a simulation artefact: as the queues are not stable, the simulation is ended before all unnecessary tasks of the unstable workers are computed. Hence, this translates into a utilization which does not scale exactly as $\Omega$.}. When there is purging, the computational load gets closer to $1$.

As we see in Fig.~\ref{fig:trade_off_delay_comp_load}, the average execution time of the uniform split is much larger than the optimal and ideal splits. This is because a subset of workers are unstable in this naive split. When the computational load increases, the average job execution time for the uniform scheme decreases since the supportive task results of the stable workers can compensate for the unstable workers. More importantly, we observe that the performance of the proposed optimal split is close to the performance of the genie-aided split, although our approach relies on negligible feedback from the fusion node for estimations.
From  Fig.~\ref{fig:trade_off_delay_comp_load}(c) and (d), which show the realizations of $100$ of the $500$ independent runs which have been used to generate Fig.~\ref{fig:trade_off_delay_comp_load}(a) and (b), one can also observe the reliability of our approach with respect to the uniform split method. Indeed, not only the average performance is better with our framework, but also all runs have close performances. This is very different from the uniform case where the choice of the workers greatly impact the performance.

In all cases, when purging is performed, the computational load decreases and approaches to one. We see that when $\Omega=1$, purging has no effect since all tasks are necessary to resolve a job. However, the computational load slightly increases when $\Omega$ increases as more workers are used, and therefore more tasks are likely to arrive at the fusion node before purging is performed. In case of purging and when $\Omega$ is large, the performance of genie-aided, uniform, and optimal split get closer. Interestingly, our optimal solution has a slightly bigger average execution time than the two others for high $\Omega$, with purging. This can be explained by the fact that the optimization problem, based on a queuing model, does not incorporate the purging mechanism into its delay analysis. To reduce this degradation, one can incorporate purging probability into the estimation and refine the analysis.

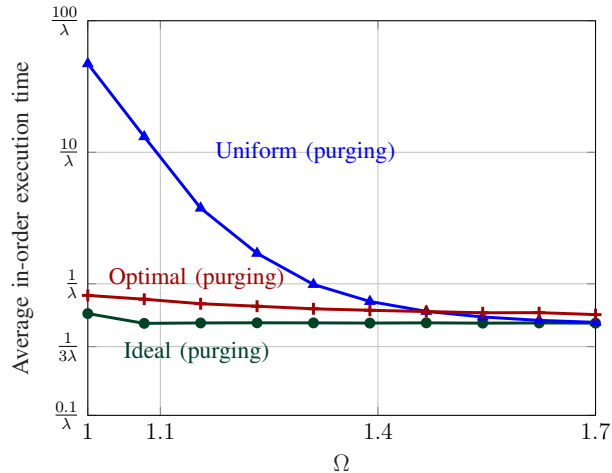
\begin{figure}
\centering
    \scalebox{0.75}{% This file was created by matlab2tikz.
%
%The latest updates can be retrieved from
%  http://www.mathworks.com/matlabcentral/fileexchange/22022-matlab2tikz-matlab2tikz
%where you can also make suggestions and rate matlab2tikz.
%
\definecolor{mycolor1}{rgb}{1.00000,0.00000,1.00000}%
\definecolor{ginipurg}{rgb}{0.00000,0.26000,0.15000}%
\definecolor{ginino}{rgb}{0.55000,0.710000,0.00000}%
\definecolor{unipurg}{rgb}{0.00000,0.00000,1.00000}%
\definecolor{unino}{rgb}{0.00000,0.50000,0.50000}
\definecolor{purg}{rgb}{0.640000,0.00000,0.00000}%
\definecolor{no}{rgb}{1.00000,0.20000,0.00000}%

\begin{tikzpicture}

\begin{axis}[%
width=9cm,
height=7cm,
at={(0.5cm,0.5cm)},
scale only axis,
xmin=1,
xmax=1.7,
xtick={1,1.1,1.4,1.7},
xticklabels={{$1$},{$1.1$},{$1.4$},{$1.7$}},
scaled x ticks = false,
ymin=100,
ymax=10000,
ytick={100,333,1000,10000,100000},
yticklabels={{$\frac{0.1}{\lambda}$},{$\frac{1}{3\lambda}$},{$\frac{1}{\lambda}$},{$\frac{10}{\lambda}$},{$\frac{100}{\lambda}$}},
xlabel style={font=\color{white!15!black}},
xlabel={$\Omega$},
ymode=log,
ymin=100,
ymax=100000,
yminorticks=true,
xmajorgrids,
ymajorgrids,
ylabel style={font=\color{white!15!black}},
ylabel={Average in-order execution time},
axis background/.style={fill=white},
title style={font=\bfseries},
legend style={legend cell align=left, align=left, draw=white!15!black}
]
\addplot [color=ginipurg, mark=*,line width = 1.5pt, mark options={solid, fill=ginipurg, ginipurg}]
  table[row sep=crcr]{%
1	593.918109525146\\
1.07777777777778	500.292509525147\\
1.15555555555556	504.054509525146\\
1.23333333333333	505.303349525147\\
1.31111111111111	504.607899525146\\
1.38888888888889	502.911529525146\\
1.46666666666667	504.973389525146\\
1.54444444444444	502.308309525147\\
1.62222222222222	503.833329525147\\
1.7	502.321139525147\\
};
%\addlegendentry{gini with purging}

\addplot [color=unipurg, mark=triangle*, line width = 1.5pt, mark options={solid, fill=unipurg, unipurg}]
  table[row sep=crcr]{%
1	47175.2392495251\\
1.07777777777778	13129.6691695251\\
1.15555555555556	3761.78254952515\\
1.23333333333333	1705.25072952515\\
1.31111111111111	988.750659525146\\
1.38888888888889	734.183859525146\\
1.46666666666667	617.356199525146\\
1.54444444444444	559.654310306709\\
1.62222222222222	527.325319525147\\
1.7	509.394639525146\\
};
%\addlegendentry{uni with purging}

\addplot [color=purg,line width = 1.5pt, mark=+,mark size=3pt, mark options={solid, fill=purg, purg}]
  table[row sep=crcr]{%
1	816.259729525146\\
1.07777777777778	763.664609525146\\
1.15555555555556	705.709999525146\\
1.23333333333333	675.290949525146\\
1.31111111111111	645.782859525146\\
1.38888888888889	629.153299525147\\
1.46666666666667	615.261219525147\\
1.54444444444444	602.319679525147\\
1.62222222222222	603.665309525147\\
1.7	583.453759525147\\
};
%\addlegendentry{our with purging}

\node at (1.15,300) {\textcolor{ginipurg}{Ideal (purging)}};
\node at (1.15,1100) {\textcolor{purg}{Optimal (purging)}};
\node at (1.3,10000) {\textcolor{unipurg}{Uniform (purging)}};
\iffalse
\addplot [color=black, mark=*, mark options={solid, fill=black, black}]
  table[row sep=crcr]{%
1	505\\
1.07777777777778	514\\
1.15555555555556	518\\
1.23333333333333	523\\
1.31111111111111	529\\
1.38888888888889	527\\
1.46666666666667	535\\
1.54444444444444	540\\
1.62222222222222	545\\
1.7	547\\
};\fi

\end{axis}
\end{tikzpicture}%}
  \caption{Evolution of the in-order execution time with respect to the redundancy ratio $\Omega$. These results are exactly those of Fig.~\ref{fig:trade_off_delay_comp_load}, yet presented in a different way.}
    \label{fig:delay_omega}
\end{figure}

As the computational load remains close to one, in order to better grasp the impact of $\Omega$ on the execution time, Fig.~\ref{fig:delay_omega} shows the evolution of the average in-order execution time as a function of $\Omega$. We again see that our optimal solution remains very close to the ideal non-causal one for all values of $\Omega$. As discussed above, the discrepancy between the three different solutions vanishes at high $\Omega$ since tasks are purged before entering into the regime where the optimization is useful.

%%%%%%%%%%%%%%%%%%%%%%%%%%%%%%%%%%%%%%%%%%%%%%%%%%%%%%%%%%
\section{Conclusions}\label{sim:conc}
%%%%%%%%%%%%%%%%%%%%%%%%%%%%%%%%%%%%%%%%%%%%%%%%%%%%%%%%%%
In this paper, we studied a stochastic heterogeneous setting for the stream distributed coded computation problem. We proposed a systematic framework for the joint scheduling-coding that incorporates the diverse properties of the workers into the design. In particular, an appropriate set of workers from a pool of workers are chosen to provide stability, and the coding parameters and the load split among the selected workers are optimally identified to minimize the average in-order job execution time. Furthermore, a realistic feedback model was introduced to track the state of the workers and the progress of ongoing jobs at the master node.

An interesting future direction is to extend the framework for stream computation of jobs with general arrival model, not necessarily Poisson arrival model. Moreover, proposing
a joint scheduling-coding for stream distributed computation in a network of heterogeneous interconnected workers is another interesting future research direction. Last but not the least, generalizing the proposed framework for stream computation of iterative jobs, often required in machine learning applications, is one of our future goals.

%%%%%%%%%%%%%%%%%%%%%%%%%%%%%%%%%%%%%%%%%%%%%%%%%%%%%%%%%%
\section{Acknowledgments}
%%%%%%%%%%%%%%%%%%%%%%%%%%%%%%%%%%%%%%%%%%%%%%%%%%%%%%%%%%
This work is supported by the European Regional Development Fund (FEDER), through the Regional Operational Programme of Centre (CENTRO 2020) of the Portugal 2020 framework and FCT under the MIT Portugal Program [Project SNOB-5G with Nr. 045929(CENTRO-01-0247-FEDER-045929)]. Guillaume Thiran is a Research Fellow of the F.R.S.-FNRS.

%%%%%%%%%%%%%%%%%%%%%%%%%%%%%%%%%%%%%%%%%%%%%%%%%%%%%%%%%%
\appendices
%%%%%%%%%%%%%%%%%%%%%%%%%%%%%%%%%%%%%%%%%%%%%%%%%%%%%%%%%%

%%%%%%%%%%%%%%%%%%%%%%%%%%%%%%%%%%%%%%%%%%%%%%%%%%%%%%%%%%
\section{Distributed Matrix Multiplication with PolyDot}\label{subsec:preliminaries}
%%%%%%%%%%%%%%%%%%%%%%%%%%%%%%%%%%%%%%%%%%%%%%%%%%%%%%%%%%

Here, we provide background on distributed matrix multiplication with PolyDot scheme \cite{Dutta2020}. The computational job is multiplication of two large matrices $\mathbf{A}$ and $\mathbf{B}$ of size $N\times N$, i.e., $\mathbf{A}\mathbf{B}$, in a distributed fashion. The computational inputs of each job are devised as follows:
\ifdouble
\begin{equation*}
\begin{split}
\mathbf{A}=\left[
\begin{array}{ccc}
\mathbf{A}_{0,0}&\cdots&\mathbf{A}_{0,s-1}\\
\vdots&\ddots&\vdots\\
\mathbf{A}_{t-1,0}&\cdots&\mathbf{A}_{t-1,s-1}
\end{array}\right],\\ \mathbf{B}=\left[
\begin{array}{ccc}
\mathbf{B}_{0,0}&\cdots&\mathbf{B}_{0,t-1}\\
\vdots&\ddots&\vdots\\
\mathbf{B}_{s-1,0}&\cdots&\mathbf{B}_{s-1,t-1}
\end{array}\right],
\end{split}
\end{equation*}
\else
\begin{equation*}
\begin{split}
\mathbf{A}=\left[
\begin{array}{ccc}
\mathbf{A}_{0,0}&\cdots&\mathbf{A}_{0,s-1}\\
\vdots&\ddots&\vdots\\
\mathbf{A}_{t-1,0}&\cdots&\mathbf{A}_{t-1,s-1}
\end{array}\right], \quad  \mathbf{B}=\left[
\begin{array}{ccc}
\mathbf{B}_{0,0}&\cdots&\mathbf{B}_{0,t-1}\\
\vdots&\ddots&\vdots\\
\mathbf{B}_{s-1,0}&\cdots&\mathbf{B}_{s-1,t-1}
\end{array}\right],
\end{split}
\end{equation*}
\fi
$st=m$, $X_{si+j}=[\textbf{A}_{i,j},\textbf{B}_{i,j}]$, and $f(X_1,...,X_m)=\textbf{A}\textbf{B}$.
The master node splits each multiplication job into a set of smaller tasks. The inputs of $r$-th task has the format $[\mathcal{P}_\textbf{A}(x),\mathcal{P}_\textbf{B}(x)]$, evaluated at $x=x_r$, and will be sent to a worker to perform the multiplication over these smaller matrices. The matrices $\mathcal{P}_\textbf{A}(x)$ and $\mathcal{P}_\textbf{B}(x)$ are defined as follows:
\ifdouble
\begin{equation*}
\begin{split}
\mathcal{P}_\textbf{A}(x)&=\sum_{i=0}^{t-1}\sum_{j=0}^{s-1}\textbf{A}_{i,j}x^{tj+i},\\
\mathcal{P}_\textbf{B}(x)&=\sum_{k=0}^{s-1}\sum_{l=0}^{t-1}\textbf{B}_{k,l}x^{t(2s-1)l+t(s-1-k)}.
\end{split}
\end{equation*}
\else
\begin{equation*}
\mathcal{P}_\textbf{A}(x)=\sum_{i=0}^{t-1}\sum_{j=0}^{s-1}\textbf{A}_{i,j}x^{tj+i}, \quad
\mathcal{P}_\textbf{B}(x)=\sum_{k=0}^{s-1}\sum_{l=0}^{t-1}\textbf{B}_{k,l}x^{t(2s-1)l+t(s-1-k)}.
\end{equation*}
\fi

Then, the worker computes $\mathcal{P}_\textbf{C}(x_r)=\mathcal{P}_\textbf{A}(x_r)\mathcal{P}_\textbf{B}(x_r)$. Fusion node uses $K=t^2(2s-1)$ successful task results from the workers to identify the coefficient of term $x^{t(2s-1)(l-1)+t(s-1)+i-1}$ for polynomial $\mathcal{P}_\textbf{C}(x)$, $\forall i,l\in\{0,\dots,t-1\}$, defined below:
\begin{equation*}
\mathcal{P}_\textbf{C}(x)=\sum_{i,j,k,l}\textbf{A}_{i,j}\textbf{B}_{k,l}x^{t(2s-1)l+t(s-1+j-k)+i}.
\end{equation*}
Note this coefficient, which happens when $j-k=0$, is $\mathbf{C}_{i,l}=\sum_{k=0}^{s-1}\textbf{A}_{i,k}\textbf{B}_{k,l}$. The  maximum degree of $p_\textbf{C}(x)$ is $t^2(2s-1)-1$, thus one needs $t^2(2s-1)$ evaluations to identify the polynomial.

The parameters $K$, $I_\text{in}$, $I_\text{out}$, $C$, and the encoding and decoding computational complexities are functions of parameter $s$ and $t$, as we see in Example~\ref{exam:PolyDot}. These parameters influence the average job execution time, and we later optimize these two parameters to achieve a desired trade-off between various performance metrics of our solution.

%%%%%%%%%%%%%%%%%%%%%%%%%%%%%%%%%%%%%%%%%%%%%%%%%%%%%%%%%%
\section{Proof of Theorem~\ref{theorem:optimal_split_comm} - Optimal Split 1}\label{theorem:optimal_split_comm_proof}
%%%%%%%%%%%%%%%%%%%%%%%%%%%%%%%%%%%%%%%%%%%%%%%%%%%%%%%%%%

The proof of Theorem~\ref{theorem:optimal_split_comm} is considered here. We first show that the Optimization Problem~\ref{opt:rate-split1} is convex. Hence, the optimal solution is unique.

\begin{lemma}
\label{lemma:convex}
    The objective function in (\ref{equ:opt_task_split}) is strictly convex on its domain, i.e.,
    $$\boldsymbol{\phi}\in\left\{\{\phi_1,\dots,\phi_P\}|0\leq \phi_p \leq \rcomp\right\}.$$
\end{lemma}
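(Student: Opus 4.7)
The plan is to exploit separability: the objective function $F(\boldsymbol{\phi})$ is a sum of single-variable functions $F_p(\phi_p)$, so its Hessian is diagonal and strict convexity reduces to showing $F_p''(\phi_p) > 0$ on the admissible interval $[0, \rcomp)$ for every $p$. The linear term $\left(\frac{1}{\rcomp}+\frac{1}{\rcomm}\right)\phi_p$ contributes zero to the second derivative, so everything boils down to analyzing $h(\phi) \triangleq \frac{\phi^2}{\rcomp - \phi}$.

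First I would write
\begin{equation*}
F_p(\phi_p) = a_p \, h(\phi_p) + \left(\frac{1}{\rcomp}+\frac{1}{\rcomm}\right)\phi_p,
\end{equation*}
noting that $a_p = 0.5 \lambda E[T_p^2]/E[T_p] > 0$ whenever the workers have nondegenerate service time. Next I would compute the derivatives of $h$ directly. Writing $r = \rcomp$, a short quotient-rule computation gives
\begin{equation*}
h'(\phi) = \frac{\phi(2r - \phi)}{(r - \phi)^2}, \qquad h''(\phi) = \frac{2r^2}{(r - \phi)^3}.
\end{equation*}
The cleanest way to verify the second expression is to observe that numerator of $h''$ simplifies as $(2r-2\phi)(r-\phi) + 2\phi(2r-\phi) = 2(r-\phi)^2 + 2\phi(2r-\phi) = 2r^2$ after cancellation of the $-2r\phi+2r\phi$ and $\phi^2-\phi^2$ terms.

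Since $0 \leq \phi_p < \rcomp$ on the interior of the feasible region (the stability constraint $\phi_p \leq \rcomp$ guarantees $r - \phi_p > 0$, where strict positivity holds because the objective blows up at $\phi_p = \rcomp$ and this boundary is excluded from the effective domain), and since $a_p > 0$, I conclude
\begin{equation*}
F_p''(\phi_p) = \frac{2 a_p \, (\rcomp)^2}{(\rcomp - \phi_p)^3} > 0.
\end{equation*}
Strict convexity of each $F_p$ then yields strict convexity of $F = \sum_p F_p$. There is no real obstacle here; the only care required is to note that $a_p>0$ under our modeling assumptions and to handle the behavior near $\phi_p = \rcomp$, where the stability constraint ensures we never reach the singularity.
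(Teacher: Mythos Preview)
Your proof is correct and follows essentially the same approach as the paper: decompose the separable objective into $P$ single-variable pieces, discard the linear terms, and verify strict convexity of $\phi\mapsto a_p\phi^2/(\rcomp-\phi)$ by computing its second derivative. Your expression $F_p''(\phi_p)=2a_p(\rcomp)^2/(\rcomp-\phi_p)^3$ is in fact the simplified form of the paper's two-term formula $\frac{2a_p}{\rcomp-\phi_p}+\frac{2a_p\phi_p(2\rcomp-\phi_p)}{(\rcomp-\phi_p)^3}$, so the arguments are identical up to algebraic simplification.
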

\begin{proof}
The objective function in (\ref{equ:opt_task_split}) is summation of $P$ strictly convex terms and $P$ linear terms, and thus it is strictly convex. We note $f\paren{\phi}=\frac{\alpha\phi^2}{\beta-\phi}$ is strictly convex over $\phil\leq \phi\leq \beta$ when $\alpha>0$ since,
\begin{align*}
    f^{''}\paren{\phi} =\frac{2\alpha}{\beta-\phi}+\frac{2\alpha\phi(2\beta-\phi)}{(\beta-\phi)^3}>0,\qquad \forall\hspace{0.03in} 0\leq \phi\leq \beta.
\end{align*}
\end{proof}

Since the optimization problem has a strictly convex objective function, and the Slater's conditions hold\footnote{Slater's conditions state the strong duality holds if there exists an strictly feasible point in the domain of the concave optimization problem. Thus, once can target the dual problem and use the Lagrangian method\cite[Section 5.2.3]{boyd2004convex}.}, the unique solution can be found using Karush-Kuhn-tucker (KKT) conditions. We first define the Lagrangian function of the problem,
\begin{multline*}
    \mathcal{L}\paren{\boldsymbol{\phi},\boldsymbol{\delta},\boldsymbol{\gamma},\eta} = \sum_{p=1}^P \frac{a_p\phi_p^2}{\rcomp-\phi_p}+\sum_{p=1}^P\left(\frac{1}{\rcomp}+\frac{1}{\rcomm}\right)\phi_p\\+\sum_{p=1}^P\delta_p\paren{\phil-\phi_p}\nonumber+\sum_{p=1}^P\gamma_p\paren{\phi_p-\rcomp}-\eta\paren{\sum_{p=1}^P\phi_p-1}.
\end{multline*}
According to the KKT conditions, the optimal solution of the original optimization problem is a saddle point for the Lagrangian function (the first derivative of the Lagrangian function must be zero at the optimal point),
\begin{equation}\label{equ:Lagrangian1}
\frac{a_p(\rcomp)^2}{\paren{\rcomp-\phi_p}^2}
+\xi_p-\delta_p+\gamma_p-\eta=0,
\end{equation}
where $\xi_p\triangleq \frac{1}{\rcomp}+\frac{1}{\rcomm}-a_p$. It also satisfies the following feasibility conditions:
\begin{align*}
    &\text{Primal feasibility:}\quad\phil\leq \phi_p\leq \rcomp,\quad \sum_{p=1}^P\phi_p=1,\\
    &\text{Dual feasibility:}\quad \delta_p\geq0,\quad \gamma_p\geq 0,\\
    &\text{Complementary slackness:}\quad \delta_p\paren{\phil-\phi_p}=0,\\
    &\hspace{4.55cm}\quad \gamma_p\paren{\phi_p-\rcomp}=0.
\end{align*}

First, we note that $\gamma_p=0$, since $\gamma_p\neq0$ makes $\phi_p=\rcomp$, which cannot be a valid solution for (\ref{equ:Lagrangian1}). We study the cases when $\delta_p>0$ and $\delta_p=0$, separately:

\noindent [\textbf{Case 1:} Workers with $\delta_p>0$] Then, $\phi_p=\phil$ and because of (\ref{equ:Lagrangian1}),
\begin{equation*}
    \delta_p=\delta_p(\eta)=\frac{a_p(\rcomp)^2}{\paren{\rcomp-\phil}^2}+\xi_p-\eta,
\end{equation*}
which must be positive requiring,
\begin{equation*}\label{equ:ineq_eta1}
    \eta<\frac{a_p(\rcomp)^2}{\paren{\rcomp-\phil}^2}+\xi_p.
\end{equation*}
[\textbf{Case 2:} Workers with $\delta_p=0$] Then (\ref{equ:Lagrangian1}) simplifies to,

\begin{align}
    &\frac{a_p(\rcomp)^2}{\paren{\rcomp-\phi_p}^2}+\xi_p-\eta=0,\nonumber\\
    &\paren{\rcomp-\phi_p}^2=\frac{{a_p(\rcomp)^2}}{\eta-\xi_p}\label{equ:phisolve}.
\end{align}
We note that (\ref{equ:phisolve}) is only feasible when $\eta>\xi_p$. In this case, the valid solution for (\ref{equ:phisolve}) is,
\begin{equation*}
    \phi_p=\rcomp\left(1-\sqrt{\frac{a_p}{\eta-\xi_p}}\right).
\end{equation*}
The other solution $\phi_p=\rcomp\left(1+\sqrt{\frac{a_p}{\eta-xi_p}}\right)$ does not satisfy the constraint $\phi_p\leq \rcomp$ and is invalid.
The constraint  $\phi_p\leq\phil$ requires,
\begin{align*}
    &\rcomp\left(1-\sqrt{\frac{a_p}{\eta-\xi_p}}\right)\geq\phil\nonumber\\
    &\frac{a_p}{\eta-\xi_p}\leq\frac{(\rcomp-\phil)^2}{(\rcomp)^2}\nonumber\\
    &\eta\geq\frac{a_p(\rcomp)^2}{\paren{\rcomp-\phil}^2}+\xi_p,
\end{align*}
which is a stronger inequality than $\xi_p>\eta$. Finally, we consider both cases together,
\begin{equation*}
    \phi_p=\begin{cases}
     \rcomp\left(1-\sqrt{\frac{a_p}{\eta-\xi_p}}\right),&\frac{a_p(\rcomp)^2}{\paren{\rcomp-\phil}^2}+\xi_p\leq\eta,\\
     \phil,&\text{otherwise.}
    \end{cases}
\end{equation*}
To incorporate the last remaining constraint, $\eta$ is set such that $\sum_{p=1}^{P}\phi_p=1$. Since the problem has a unique solution, there exists a unique $\eta$ such that above holds.
Conversely, with the above choice, the system of KKT conditions holds with $\gamma_p=0$, $\eta$ such that $\sum_{p=1}^P\phi_p=1$, and $\delta_p=\max\{\delta_p(\eta),0\}$, and this concludes the proof.

%%%%%%%%%%%%%%%%%%%%%%%%%%%%%%%%%%%%%%%%%%%%%%%%%%%%%%%%%%
\section{Proof of Theorem ~\ref{theorem:optimal_split_no_comm}}\label{theorem:optimal_split_no_comm_proof}
%%%%%%%%%%%%%%%%%%%%%%%%%%%%%%%%%%%%%%%%%%%%%%%%%%%%%%%%%%
According to Theorem~\ref{theorem:optimal_split_comm}, and when $a_p=\nicefrac{1}{\rcomp}$ and $\rcomm\rightarrow\infty$, the optimal load split is,
    $$
    \phi_p=\max\left\{\rcomp\left(1-\sqrt{\frac{1}{\eta\rcomp}}\right),\phil\right\}=\begin{cases}
    \rcomp\left(1-\sqrt{\frac{1}{\eta\rcomp}}\right),&\frac{\rcomp}{(\rcomp-\phil)^2}<\eta\\
    \phil,&\text{otherwise.}
    \end{cases}.
    $$
    Since the function $\frac{r}{(r-\phil)^2}$ is strictly decreasing in $r>\phil$ and the workers are sorted such that $ r_1^\text{comp}\geq r_2^\text{comp}\geq\dots\geq r_P^\text{comp}$,
    $$
    \phi_p=\begin{cases}
    \rcomp\left(1-\sqrt{\frac{1}{\eta\rcomp}}\right),&p\leq p^*\\
    \phil&\text{otherwise}.
    \end{cases}
    $$
    Finally, for having $\sum_{p=1}^{P}\phi_p=1$, $\eta=\eta(p^*)$ must be a valid choice, i.e,
    \begin{equation*}
       \frac{r^\text{comp}_{p^*}}{(r^\text{comp}_{p^*}-\phil)^2}<\eta(p^*)\leq \frac{r^\text{comp}_{p^*+1}}{(r^\text{comp}_{p^*+1}-\phil)^2}.
    \end{equation*}
Since the optimization problem has a unique solution, such unique $p^*$ exists.

%%%%%%%%%%%%%%%%%%%%%%%%%%%%%%%%%%%%%%%%%%%%%%%%%%%%%%%%%%
\bibliographystyle{IEEEtran}
\bibliography{references}
%%%%%%%%%%%%%%%%%%%%%%%%%%%%%%%%%%%%%%%%%%%%%%%%%%%%%%%%%%
\end{document}